\renewcommand{\geq}{\geqslant}
\renewcommand{\leq}{\leqslant}
\renewcommand{\ge}{\geq}
\renewcommand{\le}{\leq}
\newcommand{\bigO}{\mathcal{O}}
\newcommand{\union}{\mathbin{\cup}}
\newcommand{\bigunion}{\mathop{\bigcup}}
\newcommand{\intersect}{\mathbin{\cap}}
\newcommand{\MonInt}{\ensuremath{\textsc{MonIsect}}}
\newcommand{\SgpInt}{\ensuremath{\textsc{SgpIsect}}}
\newcommand{\SAT}{\ensuremath{\textsc{3-Sat}}}
\newcommand{\abs}[1] {\ensuremath\left|#1\right|}
\newcommand{\ceil}[1] {\ensuremath\left\lceil#1\right\rceil}
\newcommand{\set}[2] {\ensuremath{\left\{#1 \mid #2\right\}}}
\newcommand{\os}[1] {\ensuremath{\left\{#1\right\}}}
\newcommand{\oL}{\mathbf{L}}
\newcommand{\AC}{\ensuremath{\mathsf{AC}}}
\newcommand{\qAC}{\ensuremath{\mathsf{qAC}}}
\newcommand{\NP}{\ensuremath{\mathsf{NP}}}
\newcommand{\PSPACE}{\ensuremath{\mathsf{PSPACE}}}
\newcommand{\vC}{\mathbf{C}}
\newcommand{\vV}{\mathbf{V}}
\newcommand{\vLV}{\mathbf{LV}}
\newcommand{\vLW}{\mathbf{LW}}
\newcommand{\vW}{\mathbf{W}}
\newcommand{\vM}{\mathbf{M}}
\newcommand{\vAp}{\mathbf{A}}
\newcommand{\vL}{\mathbf{L}}
\newcommand{\vDA}{\mathbf{DA}}
\newcommand{\vDS}{\mathbf{DS}}
\newcommand{\vI}{\mathbf{I}}
\newcommand{\vLI}{\mathbf{LI}}
\newcommand{\vLG}{\mathbf{LG}}
\newcommand{\vLDO}{\mathbf{LDO}}
\newcommand{\vLDS}{\mathbf{LDS}}
\newcommand{\vNil}{\mathbf{N}}
\newcommand{\vCom}{\mathbf{Com}}
\newcommand{\vDO}{\mathbf{DO}}
\newcommand{\vG}{\mathbf{G}}
\newcommand{\vD}{\mathbf{D}}
\newcommand{\vR}{\mathbf{R}}
\newcommand{\val}{\mathsf{val}}
\newcommand{\gR}{\mathcal{R}}
\newcommand{\gD}{\mathcal{D}}
\newcommand{\Z}{\mathbb{Z}}
\newcommand{\N}{\mathbb{N}}
\newcommand{\ie}{i.e.,~}
\newcommand{\eg}{e.g.~}
\newcommand{\ms}{\hspace*{0.5pt}}
\newcommand{\psp}{\hspace*{2pt}}
\newcommand{\cCP}{$\mathrm{const}$\psp{CP}\xspace}
\newcommand{\cPCP}{$\mathrm{const}$\psp{PCP}\xspace}
\newcommand{\plCP}{$\mathrm{polylog}$\psp{CP}\xspace}
\newcommand{\plPCP}{$\mathrm{polylog}$\psp{PCP}\xspace}
\newcommand{\pCP}{$\mathrm{poly}$\psp{CP}\xspace}
\newcommand{\pPCP}{$\mathrm{poly}$\psp{PCP}\xspace}
\newcommand{\thickhline}{%
    \noalign {\ifnum 0=`}\fi \hrule height 1pt
    \futurelet \reserved@a \@xhline
}
\newcolumntype{"}{@{\hskip\tabcolsep\vrule width 1pt\hskip\tabcolsep}}
\newcolumntype{Y}{>{\centering\arraybackslash}X}
\theoremstyle{plain}
\newtheorem{theorem}{Theorem}
\newtheorem{lemma}[theorem]{Lemma}
\newtheorem{corollary}[theorem]{Corollary}
\newtheorem{proposition}[theorem]{Proposition}
\newtheorem{conjecture}[theorem]{Conjecture}
\title{The Intersection Problem for Finite Semigroups}
\author{Lukas Fleischer}
\date{FMI, University of Stuttgart\thanks{Supported by the German Research Foundation (DFG) under grant DI 435/5--2.} \\
  Universitätsstraße 38, 70569 Stuttgart, Germany\\
\texttt{fleischer@fmi.uni-stuttgart.de}}
\begin{document}

\maketitle

\begin{abstract}
  \noindent
  {\sffamily\normalsize\bfseries{Abstract.}} \
  We investigate the intersection problem for finite semigroups, which asks for
  a given set of regular languages, represented by recognizing morphisms to
  finite semigroups, whether there exists a word contained in their
  intersection.
  We introduce compressibility measures as a useful tool to classify the
  intersection problem for certain classes of finite semigroups into circuit
  complexity classes and Turing machine complexity classes. Using this
  framework, we obtain a new and simple proof that for groups and commutative
  semigroups, the problem is contained in $\NP$.
  We uncover certain structural and non-structural properties determining the
  complexity of the intersection problem for varieties of semigroups containing
  only trivial submonoids.
  More specifically, we prove $\NP$-hardness for classes of semigroups having a
  property called \emph{unbounded order} and for the class of all nilpotent
  semigroups of \emph{bounded order}. On the contrary, we show that bounded
  order and commutativity imply containment in the circuit complexity class
  $\qAC^k$ (for some $k \in \N$) and decidability in quasi-polynomial time.
  We also establish connections to the monoid variant of the problem.
\end{abstract}

\section{Introduction}

A careful analysis of the complexity of decision problems for regular languages
has triggered renewed interest in the classical intersection non-emptiness
problem (called \emph{intersection problem} in the following), as first
described by Kozen in 1977~\cite{koz77:short}, and in the closely related
\emph{membership problem for transformation
monoids}~\cite{BabaiLS87:short,Beaudry88thesis,BeaudryMT92,FurstHopcroftLuks80:short,Sims1967}.
The connection between these two problems stems from the observation that a set
of deterministic finite automata over a common alphabet can be considered as
transformations on the (disjoint) union of their states.
Both problems are well-known to be $\PSPACE$-complete in the general case but
become easier when the inputs are restricted to have certain structural
properties. These properties are often expressed in terms of membership to a
certain \emph{variety of finite monoids}; in the automaton setting, one
considers the transition monoids of the automata.
For example, for the variety of $\gR$-trivial monoids, usually denoted by
$\vR$, both problems were shown to be decidable in non-deterministic polynomial
time~\cite{BeaudryMT92}. On the other hand, it is known that
$\PSPACE$-completeness already holds for any variety not contained within
$\vDS$, the variety of all finite monoids whose regular $\gD$-classes form
subsemigroups.
However, for many subvarieties of $\vDS$, such as $\vL$ (the left-right dual of
$\vR$) or $\vDA$ (all aperiodic monoids from $\vDS$), the problems are only
known to be $\NP$-hard and to be contained within $\PSPACE$. The problem of
determining the exact complexity for varieties in this interval has been open
for more than 25~years~\cite{BeaudryMT92,tt02:short}.

Recently, Kuf\-leitner and the author suggested considering the algebraic
variant of the problem, where the languages in the input are represented by
finite monoids instead of automata~\cite{FleischerK18:short}. Formally, it is
defined as follows:
\vspace{1em}

\noindent
\begin{tabularx}{\textwidth}{p{1.5cm}X}
  \thickhline
  $\MonInt(\vC)$ \\
  \hline
  \textsf{Input}: & Morphisms $h_i \colon A^* \to M_i \in \vC$ and sets $P_i \subseteq M_i$ with $1 \le i \le k$ \\
  \textsf{Question}: & Is $h_1^{-1}(P_1) \intersect \cdots \intersect h_k^{-1}(P_k) \ne \emptyset$? \\
  \hline
\end{tabularx}
\vspace{1ex}

Here, $\vC$ is some fixed class of finite monoids and the monoids themselves
are given as multiplication tables.
Transitioning to the algebraic setting allowed for making some substantial
progress in understanding the complexity of the problem:
Kuf\-leitner and the author proved $\NP$-completeness of $\MonInt(\vDO)$ where
$\vDO$ is a quite large subvariety of $\vDS$ including both $\vL$ and $\vDA$.
Still, even for the monoid variant, $\PSPACE$-completeness is only known to
hold for varieties not contained within $\vDS$, a proper superset of $\vDO$.

Attempts to progress further in understanding the complexity of $\MonInt$ led
to the investigation of classes of semigroups $\vC$ instead of monoids:
\vspace{1em}

\noindent
\begin{tabularx}{\textwidth}{p{1.5cm}X}
  \thickhline
  $\SgpInt(\vC)$ \\
  \hline
  \textsf{Input}: & Morphisms $h_i \colon A^+ \to S_i \in \vC$ and sets $P_i \subseteq S_i$ with $1 \le i \le k$ \\
  \textsf{Question}: & Is $h_1^{-1}(P_1) \intersect \cdots \intersect h_k^{-1}(P_k) \ne \emptyset$? \\
  \hline
\end{tabularx}
\vspace{1ex}

As in the monoid variant, the semigroups are assumed to be given as
multiplication tables.
While making this distinction between monoids and semigroups may sound subtle
at first sight, it has a significant impact on complexity questions and is
expected to yield new insights.
For example, all known $\PSPACE$-hardness results rely heavily on the existence
of neutral letters.

We mainly investigate the intersection problem for \emph{varieties of finite
semigroups}.
In~\cite{FleischerK18:short}, $\MonInt(\vV)$ was shown to be $\NP$-hard for
every non-trivial variety of finite monoids $\vV$. Thus, in this work, we focus
on the intersection problem for varieties of finite semigroups containing only
trivial submonoids.
We describe an infinite sequence of varieties $\vV_1 \subseteq \vV_2 \subseteq
\cdots$ such that $\SgpInt(\vV_i) \in \AC^0$ for each $i \ge 1$ but the
intersection problem for its limit $\SgpInt(\vV_\infty)$ (where $\vV_\infty =
\bigcup_{i \in \N} \vV_i$) is $\NP$-complete.
This is surprising for the following reason: for the automaton and monoid
variants, all known hardness results are tied to purely structural properties.
$\NP$-hardness of $\MonInt$ comes from the fact that the problem is $\NP$-hard
even for the monoid $U_1$ and for the cyclic group $\Z / 2\Z$, and
$\PSPACE$-hardness comes from the fact that even $\MonInt(B_2^1)$ is
$\PSPACE$-hard~\cite{FleischerK18:short}.
Since every semigroup from $\vV_\infty$ is contained in infinitely many
varieties $\vV_k$ in the sequence above, the existence of such a pattern cannot
be the sole reason for $\NP$-hardness in the semigroup setting.
It is open whether a similar situation occurs below $\vDS$ in the monoid or
in the automaton setting.

To investigate other parameters with an impact on the complexity of the
problem, we introduce a versatile framework based on the notion of
\emph{product circuits properties}. These properties are a measure of
compressibility of witnesses for intersection non-emptiness.
Using this framework, we obtain a new and easy proof that both $\SgpInt(\vG)$
and $\SgpInt(\vCom)$ are contained in $\NP$.
We prove $\NP$-completeness of $\SgpInt$ for classes having a property we call
\emph{unbounded order} (this includes the class of all nilpotent and
commutative semigroups) and for the class of all nilpotent semigroups of
\emph{bounded order}.
On the contrary, we show that for every commutative variety with bounded order,
the intersection problem is contained in some uniform version of a circuit
complexity class $\qAC^k$ and thus decidable in quasi-polynomial time.  As
problems decidable in quasi-polynomial time cannot be $\NP$-hard unless the
exponential time hypothesis fails, this suggests that an interplay of
structural properties and non-structural properties determines the complexity
of the problem.
We also suggest a way to transfer complexity results from the monoid setting to
the semigroup setting.

\section{Preliminaries}
\label{sec:prelim}

\paragraph{Algebra.}

A \emph{semigroup} is a non-empty set equipped with an associative binary
operation, often also referred to as \emph{multiplication}.
A semigroup $M$ with a \emph{neutral element}, \ie{}an element $e \in M$ such
that $ex = x = xe$ for all $x \in M$, is called \emph{monoid}.
The neutral element is unique and usually denoted by $1$.
An element $x \in S$ is \emph{idempotent} if $x^2 = x$ and the set of all
idempotent elements of a semigroup $S$ is denoted by $E(S)$.
A \emph{zero} element $z$ of a finite semigroup $S$ satisfies $zx = z = xz$ for
all $x \in S$.
Each semigroup contains at most one zero element and a semigroup is
\emph{nilpotent} if its only idempotent element is a zero element.
The set of all finite words $A^*$ (resp.~all non-empty finite words $A^+$)
forms a monoid (resp.~semigroup) with concatenation as multiplication.

A \emph{subsemigroup} (resp.~\emph{submonoid}) of a semigroup (resp.~monoid) is
a subset closed under multiplication.
Let $S$ and $T$ be semigroups and let $M$ and $N$ be monoids.
The \emph{direct product} of $S$ and $T$ is the Cartesian product $S \times T$
with componentwise multiplication.
A \emph{semigroup morphism} from $S$ to $T$ is a mapping $h \colon S \to T$
such that $h(s)h(t) = h(st)$ for all $s, t \in S$.
A \emph{monoid morphism} from $M$ to $N$ is a semigroup morphism $h \colon M
\to N$ which additionally satisfies $h(1) = 1$.
The semigroup $T$ is a \emph{divisor} of $S$ if there exists a surjective
semigroup morphism from a subsemigroup of $S$ onto $T$.
The monoid $N$ is a \emph{divisor} of $M$ if there exists a surjective monoid
morphism from a submonoid of $M$ onto $N$.
We often use the term \emph{morphism} to refer to both semigroup and monoid
morphisms if the reference is clear from the context.
A morphism $h \colon A^+ \to S$ to a finite semigroup $S$ \emph{recognizes} a
language $L \subseteq A^+$ if $h^{-1}(P) = L$ for some set $P \subseteq S$. The
set $P$ is often called the \emph{accepting set} for $L$.

\paragraph{Varieties.}

A \emph{variety of finite semigroups} is a class of finite semigroups which is
closed under taking (semigroup) divisors and direct products.
A \emph{variety of finite monoids} is a class of finite monoids closed under
taking (monoid) divisors and direct products.
The class $\vG$ of all finite groups and the class $\vI$ containing only the
trivial semigroup $\os{1}$ are both varieties of finite semigroups and
varieties of finite monoids.
We also consider the following varieties of finite semigroups:

\begin{itemize}
  \item $\vCom$, the variety of all finite commutative semigroups,
  \item $\vNil$, the variety of all finite nilpotent semigroups,
  \item $\vAp_2 \cap \vNil$, the variety of all finite semigroups where $x^2 y
    = x^2 = y x^2$ for all $x, y \in S$,
  \item $\vLI_k$ (for $k \in \N$), the variety of all finite semigroups $S$
    which satisfy the equation $x_1 \cdots x_k z y_k \cdots y_1 = x_1 \cdots
    x_k y_k \cdots y_1$ for all $x_1, \dots, x_k, y_1, \dots, y_k, z \in S$.
\end{itemize}

Note that each of the varieties in this list contains semigroups which are not
monoids. Hence, they do not form varieties of finite monoids.
We will also briefly refer to the varieties $\vDS$ and $\vDO$ but their formal
definitions are not needed.

For a variety of finite semigroups $\vV$, we denote by $\vV_\vM$ the class of
all finite monoids which, when viewed as semigroups, belong to $\vV$. It is
easy to check that $\vV_\vM$ forms a variety of finite monoids.
For each semigroup $S$ and each idempotent element $e \in E(S)$, the set $eSe$
forms a monoid with the multiplication induced by $S$ and with neutral element
$e$, called the \emph{local monoid} at $e$.
For a variety of finite monoids $\vV$, we denote by $\vLV$ the variety of
finite semigroups whose local monoids belong to $\vV$.
The operators ${(\cdot)}_\vM$ and $\oL(\cdot)$ are closely related.

\begin{proposition}[Folklore]
  Let $\vV$ be a variety of finite monoids and let $\vW$ be a variety of finite
  semigroups.
  Then $\vW_\vM \subseteq \vV$ if and only if $\vW \subseteq \vLV$.
  In particular, $\vW \subseteq \vLW_\vM$.\label{prop:vm-lm}
\end{proposition}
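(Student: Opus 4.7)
The plan is to prove the two directions separately by chasing definitions, and then derive the ``in particular'' clause as an immediate specialisation.

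For the forward direction, assume $\vW_\vM \subseteq \vV$ and pick any $S \in \vW$. I need to show $S \in \vLV$, \ie{}that $eSe \in \vV$ for every $e \in E(S)$. The key observation is that $eSe$ is a subsemigroup of $S$, so it lies in $\vW$ (since varieties of finite semigroups are closed under subsemigroups, which are a special case of divisors). But $eSe$ is also a monoid with neutral element $e$, so when one forgets its monoid structure and views it as a semigroup one still gets the same element of $\vW$. Hence $eSe \in \vW_\vM$, and the hypothesis then gives $eSe \in \vV$.

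For the backward direction, assume $\vW \subseteq \vLV$ and pick any $M \in \vW_\vM$. By definition of $\vW_\vM$, the underlying semigroup of $M$ lies in $\vW$, hence in $\vLV$. Now the crucial, slightly subtle point is that $1 \in E(M)$ and the local monoid at $1$ is $1 \cdot M \cdot 1 = M$ itself. Thus $M \in \vV$, and therefore $\vW_\vM \subseteq \vV$.

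The ``in particular'' statement $\vW \subseteq \vLW_\vM$ then follows by plugging $\vV := \vW_\vM$ into the equivalence: the trivial inclusion $\vW_\vM \subseteq \vW_\vM$ translates via the forward direction into $\vW \subseteq \vLW_\vM$.

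I do not anticipate a serious obstacle here; the statement is essentially a bookkeeping lemma about the interplay between the operators ${(\cdot)}_\vM$ and $\oL(\cdot)$. The only mild subtlety to watch is the implicit identification between a monoid and its underlying semigroup, and the observation that for a monoid the whole structure is already a local monoid (at $e = 1$), which makes the backward direction work without any extra argument.
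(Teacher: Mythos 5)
Your proof is correct and follows essentially the same route as the paper's: the forward direction uses that each local monoid $eSe$ is a subsemigroup of $S$, hence lies in $\vW_\vM \subseteq \vV$, and the backward direction observes that a monoid $M \in \vW$ equals its own local monoid at the identity. The ``in particular'' clause via the specialisation $\vV := \vW_\vM$ is exactly the intended reading, and your remark that $\vW_\vM$ is indeed a variety of finite monoids (needed to apply the equivalence) is the only hypothesis worth checking, which the paper also notes.
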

\begin{proof}
  Suppose that $\vW_\vM \subseteq \vV$ and let $S$ be an arbitrary semigroup
  from $\vW$. For every idempotent element $e \in E(S)$, the monoid $eSe$ is a
  subsemigroup of $S$. By closure of $\vW$ under taking subsemigroups, we
  obtain $eSe \in \vW$. Since $eSe$ is a monoid, we obtain $eSe \in \vW_\vM$
  and by assumption, we have $eSe \in \vV$, as desired.

  Conversely, suppose that $\vW \subseteq \vLV$ and let $M$ be a monoid from
  $\vW$. Let $e$ be the identity element of $M$. Since $M \in \vLV$, we obtain
  $M = eMe \in \vV$.
\end{proof}

As a direct consequence, $\vLI$ is the largest variety of finite semigroups not
containing any non-trivial monoids. The following proposition connects $\vLI$
with the hierarchy of varieties $(\vLI_k)_{k \in \N}$ defined above.

\begin{proposition}[Folklore]
  Let $S$ be a finite semigroup of cardinality $n$. Then $S \in \vLI$ if and
  only if $S \in \vLI_{n + 1}$.
  \label{prop:li-nerb}
\end{proposition}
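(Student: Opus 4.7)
The forward direction is a direct substitution: given $S \in \vLI_{n+1}$, pick any $e \in E(S)$ and $s \in S$, substitute $x_1 = \cdots = x_{n+1} = y_1 = \cdots = y_{n+1} = e$ and $z = s$ into the identity defining $\vLI_{n+1}$, and use $e^k = e$ for every $k \ge 1$ to reduce it to $ese = e$. Since $s$ was arbitrary, $eSe = \{e\}$ for every idempotent $e$, so $S \in \vLI$.

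For the converse, assume $S \in \vLI$. The heart of the argument is a key lemma: for all $e, f \in E(S)$ and all $z \in S$, $ezf = ef$. To prove it, first observe that $ef$ is idempotent, because $(ef)(ef) = e(fef) = ef$ using $fSf = \{f\}$; hence $(ef)\,S\,(ef) = \{ef\}$, so in particular $(ef)(ezf)(ef) = ef$. On the other hand, using $efe = e$ and $fef = f$, one computes $(ef)(ezf) = (efe)zf = ezf$ and $(ezf)(ef) = ez(fef) = ezf$, so the same triple product equals $ezf$; therefore $ezf = ef$.

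With the lemma in hand, I factor $u := x_1 \cdots x_{n+1}$ and $v := y_{n+1} \cdots y_1$ by pigeonhole. The $n+1$ prefixes $\pi_k = x_1 \cdots x_k$ lie in a set of size $n$, so $\pi_i = \pi_j$ for some $1 \le i < j \le n+1$; setting $b = x_{i+1} \cdots x_j$ and $e = b^\omega$, the equality gives $\pi_i = \pi_i b$, whence $\pi_i e = \pi_i$, and so $u = \pi_i \cdot e \cdot (x_{j+1} \cdots x_{n+1})$. Writing this as $u = u_1\,e\,u_2$ with $u_1 \in S$, $e \in E(S)$, and $u_2 \in S \cup \{\varepsilon\}$ (with $u_2 = \varepsilon$ exactly when $j = n+1$), the symmetric argument applied to the suffixes $\tau_k = y_k \cdots y_1$ of $v$ yields $v = v_1\,f\,v_2$ with $f \in E(S)$, $v_2 \in S$, and $v_1 \in S \cup \{\varepsilon\}$.

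Plugging these factorizations into both sides and applying the key lemma finishes it: $uzv = u_1 \cdot e\,(u_2 z v_1)\,f \cdot v_2 = u_1 \cdot ef \cdot v_2$ and $uv = u_1 \cdot e\,(u_2 v_1)\,f \cdot v_2 = u_1 \cdot ef \cdot v_2$, where in the degenerate subcase $u_2 = v_1 = \varepsilon$ the product $uv$ directly equals $u_1 e f v_2$ while $uzv = u_1 \cdot ezf \cdot v_2 = u_1 ef v_2$ by the key lemma applied to $z$ alone. The main obstacle is to ensure that $n+1$ factors are always enough to extract the idempotent factorization — this is precisely the threshold required for pigeonhole on prefixes — and to handle the boundary cases with $u_2$ or $v_1$ empty uniformly; both dissolve once the key lemma is in place, since it needs only a single element of $S$ (possibly just $z$) sandwiched between the two idempotents.
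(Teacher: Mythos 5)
Your proof is correct and follows essentially the same route as the paper's: pigeonhole on prefixes and suffixes to insert flanking idempotents $e$ and $f$, then collapse everything between them using triviality of the local monoids. Your key lemma $ezf = ef$ is exactly the paper's central identity (there derived in one line as $exf = ex(fef) = (exfe)f = ef$); your detour through the idempotency of $ef$ and the empty-middle case distinction are only cosmetic differences.
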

\begin{proof}
  Suppose that $S \in \vLI$ and let $x_1, \dots, x_{n+1}, y_1, \dots, y_{n+1},
  z \in S$. By the pigeon hole principle, there exist indices $i, i' \in \os{1,
  \dots, n+1}$ such that $i < i'$ and $x_1 \cdots x_i = x_1 \cdots x_{i'}$.
  Thus, $x_1 \cdots x_i e = x_1 \cdots x_i$ for $e = (x_{i+1}\cdots
  x_{i'})^\omega$ and for every $\omega \in \N$. In particular, we may choose
  $\omega$ such that $e$ is idempotent. Analogously, there exists some index $j
  \in \os{1, \dots, n+1}$ and some idempotent element $f$ such that $f y_j
  \cdots y_1 = y_j \cdots y_1$. Since $S \in \vLI$, we have $exf = ex(fef) =
  (exfe)f = ef = (eyfe)f = ey(fef) = eyf$ for all $x, y \in S$. This yields
  \begin{align*}
    x_1 \cdots x_{n+1} z y_{n+1} \cdots y_1 & = x_1 \cdots x_i e x_{i+1} \cdots x_{n+1} z y_{n+1} \cdots y_{j+1} f y_j \cdots y_1 \\
    & = x_1 \cdots x_i e x_{i+1} \cdots x_{n+1} y_{n+1} \cdots y_{j+1} f y_j \cdots y_1 \\
    & = x_1 \cdots x_{n+1} y_{n+1} \cdots y_1,
  \end{align*}
  which shows that $S \in \vLI_{n+1}$.

  Conversely, let $S$ be contained in $\vLI_{n+1}$. For all $e \in E(S)$ and
  for all $x \in S$, we have $exe = e^{n+1} x e^{n+1} = e^{n+1} e^{n+1} = e$
  where only the second equality uses $S \in \vLI_{n+1}$. Thus, every local
  monoid $eSe$ is trivial, and $S \in \vLI$.
\end{proof}

\paragraph{Complexity.}

We assume familiarity with standard definitions from circuit complexity.
A function has \emph{quasi-polynomial} growth if it is contained in
$2^{\bigO(\log^c n)} = n^{\bigO(\log^{c-1} n)}$ for some fixed $c \in \N$.
Throughout the paper, we denote by $\AC^k$ (resp.~$\qAC^k$) the class of
languages decidable by circuit families of depth $\bigO(\log^k n)$ and
polynomial size (resp.~quasi-polynomial size);
see~\cite{Barrington92,straubing94,Vollmer99} for details.
We allow NOT gates but do not count them when measuring the depth or the size
of a circuit.
We will also refer to the standard complexity classes $\NP$ and $\PSPACE$.
The \emph{exponential time hypothesis} states that a deterministic Turing
machine cannot decide $\SAT$ in subexponential time. If the hypothesis is true,
$\NP$-complete problems cannot be decided in quasi-polynomial time; see
\eg\cite{ImpagliazzoP99:short}.

\paragraph{Straight-Line Programs.}

A \emph{straight-line program} (\emph{SLP}) is a tuple $G = (V, A, P, X_s)$
where $V$ is a finite set of \emph{variables}, $A$ is a finite set of
\emph{letters}, $P \colon V \to (V \union A)^*$ is a mapping and $X_s \in V$ is
the so-called \emph{start variable} such that the relation
\begin{equation*}
  \set{(X, Y)}{P(X) \in (V \union A)^* Y (V \union A)^*}
\end{equation*}
is acyclic.
For a variable $X \in V$, the word $P(X)$ is the \emph{right-hand side} of $X$.
Starting with some word $\alpha \in (V \union A)^*$ and repeatedly replacing
variables $X \in V$ by $P(X)$ yields a word from $A^*$, the so called
\emph{evaluation of $\alpha$}, denoted by $\val(\alpha)$.
The word \emph{produced by $G$} is $\val(G) = \val(X_s)$.
If the reference to $A$ and $V$ is clear, we will often use the notation
$h(\alpha)$ instead of $h(\val(\alpha))$ for the image of the evaluation of a
word $\alpha \in (A \union V)^*$ under a morphism $h \colon A^+ \to S$.
Analogously, we write $h(G)$ instead of $h(\val(G))$.
The \emph{size} of $G$ is $\abs{G} = \sum_{X \in V} \abs{P(X)}$.
Each variable $X$ of an SLP~$G$ can be viewed as an SLP itself by making $X$
the start variable of $G$.

The \emph{canonical SLP of a word $w \in A^+$} is $G = (V, A, P, X_s)$ with $V
= \os{X_s}$ and $P(X_s) = w$.
The following simple lemma illustrates how SLPs can be used for compression;
see \eg\cite{CharikarLLPPSS05,FleischerK18:short} for a proof.

\begin{lemma}
  Let $G = (V, A, P, X_s)$ be an SLP and let $e \in \N$.
  Then there exists an SLP $H$ of size $\abs{H} \le \abs{G} + 4 \log(e)$ such
  that $\val(H) = (\val(G))^e$.
  \label{lem:slp-intro}
\end{lemma}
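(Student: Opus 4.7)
The plan is to use repeated squaring guided by the binary expansion of $e$. Write $e = \sum_{i \in I} 2^i$ for some $I \subseteq \{0, 1, \ldots, \ell\}$, where $\ell = \lfloor \log_2 e \rfloor$. I would augment $G$ by introducing $\ell + 1$ fresh variables $Y_0, Y_1, \ldots, Y_\ell$ together with a new start variable $X_s'$, and set the right-hand sides
\begin{equation*}
  P(Y_0) = X_s, \qquad P(Y_i) = Y_{i-1} Y_{i-1} \text{ for } 1 \le i \le \ell, \qquad P(X_s') = \prod_{i \in I} Y_i,
\end{equation*}
where the last product is written in any fixed order (all factors commute in the evaluation sense, since they all equal powers of the same word). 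A straightforward induction on $i$ shows that $\val(Y_i) = \val(X_s)^{2^i} = \val(G)^{2^i}$, and concatenating the chosen powers of two yields $\val(X_s') = \val(G)^e$, so the resulting SLP $H$ satisfies $\val(H) = (\val(G))^e$.

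It remains to bound the size. The right-hand side of $Y_0$ contributes $1$, each of $Y_1, \ldots, Y_\ell$ contributes $2$, and $P(X_s')$ has length $|I| \le \ell + 1$. Hence the added size is at most $1 + 2\ell + (\ell + 1) = 3\ell + 2$. Since $\ell \le \log_2 e$, for $e \ge 4$ we have $3\ell + 2 \le 4 \log_2 e$, giving $|H| \le |G| + 4 \log_2 e$. The small cases $e = 1, 2, 3$ can be handled directly: take $H = G$ when $e = 1$, and otherwise add a single start variable with right-hand side $X_s \cdots X_s$ of length $e$, which costs at most $3 \le 4 \log_2 e$ extra symbols.

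There is no real obstacle here; the argument is purely syntactic once one commits to binary exponentiation. The only mildly delicate point is the bookkeeping of the constant: doing repeated squaring naively gives a $3 \log e$ term, and verifying that $3\log e + 2 \le 4 \log e$ (hence the stated slack of $4$) requires separating off the small values of $e$, which is the reason the lemma is stated with a constant of $4$ rather than $3$. The construction is standard and is the same device used in the references cited in the excerpt.
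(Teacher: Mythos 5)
Your proof is correct, and it is the standard repeated-squaring argument that the paper itself omits, deferring to the cited references (which use the same device). The size bookkeeping, including the separate treatment of $e \le 3$ to absorb the additive constant into $4\log e$, checks out.
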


\section{Product Circuits Properties}
\label{sec:pcp}

Let $\vC$ be a class of finite semigroups and let $f \colon \N \to \N$ be a
monotonically increasing function.
We say that $\vC$ has the \emph{$f(n)$ circuits property} if for each
morphism $h_i \colon A^+ \to S$ to a finite semigroup $S \in \vC$
and for each $w \in A^+$, there exists an SLP $G$ of size at most $f(\abs{S})$
such that $h(G) = h(w)$.
We say that $\vC$ has the \emph{$f(n)$ product circuits property} if for each
set of morphisms $h_i \colon A^+ \to S_i$ to finite semigroups $S_1, \dots, S_k
\in \vC$ and for each $w \in A^+$, there exists an SLP $G$ of size at most
$f(\abs{S_1} + \dots + \abs{S_k})$ such that $h_i(G) = h_i(w)$ for all $i \in
\os{1, \dots, k}$.
For a class of functions $\mathcal{C}$, we say that $\vC$ has the
\emph{$\mathcal{C}$ circuits property} (resp. \emph{$\mathcal{C}$~product
circuits property}) if $\vC$ has the $f(n)$ circuits property (resp.~$f(n)$
product circuits property) for some $f \in \mathcal{C}$.

Let us introduce some abbreviations for commonly used classes of functions. We
will use the terms
\begin{itemize}
  \item \emph{constant circuits property} and \emph{constant product circuits
    property} (\emph{\cCP} and \emph{\cPCP}, in short) for the class of constant
    functions, \ie{}the class of all functions of the form $f(n) = c$ for some
    $c \in \N$,
  \item \emph{poly-logarithmic circuits property} and \emph{poly-logarithmic
    product circuits property} (\emph{\plCP} and \emph{\plPCP}, in short) for
    the class of poly-logarithmic functions, \ie{}the class of all functions
    $f(n) = \log^c n$ for some $c \in \N$, and
  \item \emph{polynomial circuits property} and \emph{polynomial product
    circuits property} (\emph{\pCP} and \emph{\pPCP}, in short) for the class
    of polynomials, \ie{}the class of all functions of the form $f(n) = n^c$
    for some $c \in \N$.
\end{itemize}

The intuition behind these concepts is as follows. The $f(n)$ circuits property
is a compressibility measure for witnesses of non-emptiness of a language given
by a recognizing morphism.
The $f(n)$ product circuits property is a compressibility measure for witnesses
of non-emptiness of intersections of languages given by recognizing morphisms.
The terminology is inspired by the \emph{poly-logarithmic circuits property}
which was introduced in~\cite{FleischerCCC18:short}: having the $f(n)$ circuits
property is equivalent to requiring every element of a subsemigroup $S$ of a
semigroup from the class to be computable by an \emph{algebraic circuit} of
size $f(n)$ over any set of generators of $S$. Analogously, having the $f(n)$
product circuits property can be expressed in terms algebraic circuits with
multiplication gates for the direct product of semigroups.
It is clear that the $f(n)$ product circuits property implies the $f(n)$
circuits property. For the other direction, a weaker statement holds.

\begin{proposition}
  Let $\vC$ be a class of finite semigroups which is closed under taking direct
  products and has the $f(n)$ circuits property.
  Then $\vC$ has the ${f(n^n)}$ product circuits property.
  \label{prop:cp-pcp}
\end{proposition}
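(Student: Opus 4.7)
The plan is to reduce the product circuits property to the ordinary circuits property by moving to the direct product and then bounding its size crudely.

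Given morphisms $h_i \colon A^+ \to S_i$ with $S_1, \dots, S_k \in \vC$ and a word $w \in A^+$, I would first form the direct product $S = S_1 \times \cdots \times S_k$, which lies in $\vC$ by the assumed closure under direct products. Then I would consider the product morphism $h \colon A^+ \to S$ defined by $h(a) = (h_1(a), \dots, h_k(a))$ for each letter $a \in A$ and extended multiplicatively; this is a semigroup morphism because multiplication in $S$ is componentwise. Applying the $f(n)$ circuits property of $\vC$ to $h$ and $w$ yields an SLP $G$ of size at most $f(\abs{S})$ with $h(G) = h(w)$. Since each component projection $S \to S_i$ is itself a semigroup morphism, this equality immediately gives $h_i(G) = h_i(w)$ for every $i \in \os{1, \dots, k}$, so $G$ is a witness for the product circuits property, provided its size bound can be expressed in terms of $n = \abs{S_1} + \cdots + \abs{S_k}$.

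The remaining step is to estimate $\abs{S} = \abs{S_1} \cdots \abs{S_k}$ in terms of $n$. Because every finite semigroup is non-empty, each factor satisfies $\abs{S_i} \ge 1$, which gives both $k \le n$ and $\abs{S_i} \le n$. Combining these crude bounds yields $\abs{S} \le n^k \le n^n$. Monotonicity of $f$ then turns the bound $\abs{G} \le f(\abs{S})$ into $\abs{G} \le f(n^n)$, which is exactly what the $f(n^n)$ product circuits property demands.

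There is no real obstacle here: the argument is essentially the standard ``package all morphisms into one via the direct product'' trick, and the only mild point is noticing that the size of the packaged semigroup is at most $n^n$ rather than something closer to a polynomial in $n$. Any attempt to improve this to, say, $f(\mathrm{poly}(n))$ would require more than the bare $f(n)$ circuits property, since in general $\log \abs{S}$ can be linear in $n$ (consider $k$ factors of roughly equal size), so the $n^n$ bound is the natural outcome of this reduction.
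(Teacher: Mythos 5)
Your proposal is correct and follows essentially the same route as the paper: form the direct product $S_1 \times \cdots \times S_k$, apply the $f(n)$ circuits property to the packaged morphism, and bound $\abs{S_1}\cdots\abs{S_k} \le n^k \le n^n$ using monotonicity of $f$. No gaps.
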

\begin{proof}
  Suppose we are given morphisms $h_i \colon A^+ \to S_i$ to finite
  semigroups $S_1, \dots, S_k \in \vC$ and a word $w \in A^+$. Let $N =
  \abs{S_1} + \cdots + \abs{S_k}$. Every semigroup contains at least one
  element, so $N^N \ge N^k$ is an upper bound for the product $\abs{S_1} \cdots
  \abs{S_k}$.

  Let $S$ be the direct product $S_1 \times \cdots \times S_k$ and let $h
  \colon A^+ \to S$ be the morphism defined by $h(a) = (h_1(a), \dots, h_k(a))$
  for all $a \in A$.
  By closure of $\vC$ under taking direct products, we have $S \in \vC$.
  Since $\vC$ has the $f(n)$ circuits property, there exists some SLP $G$ of
  size at most $f(\abs{S}) = f(\abs{S_1} \cdots \abs{S_k}) \le f(N^N)$ such
  that $h(G) = h(w)$. By construction, $h_i(G) = h_i(w)$ for all $i \in \os{1,
  \dots, k}$.
\end{proof}

An essential ingredient in the proof of $\MonInt(\vDO) \in \NP$ is that the
variety of finite groups $\vG$ has the \pPCP.
In~\cite{FleischerK18:short}, this was verified by analyzing a variant of the
Schreier-Sims algorithm.
Using the previous proposition, we obtain a much simpler proof:
it is well known\,---\,and easy to show\,---\,that $\vG$ has the {\plCP}, a
result often called the \emph{Babai-Szemer{\'e}di Reachability
Lemma}~\cite{BabaiS84}. The statement then follows from the following corollary
of Proposition~\ref{prop:cp-pcp}.

\begin{corollary}
  Let $\vC$ be a class of finite semigroups which is closed under taking direct
  products and has the {\plCP}. Then $\vC$ has the {\pPCP}.
  \label{crl:plcp-pcp}
\end{corollary}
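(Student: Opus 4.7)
The plan is to apply Proposition~\ref{prop:cp-pcp} essentially as a black box and then simplify the resulting bound. Since $\vC$ has the \plCP, there exists some constant $c \in \N$ such that $\vC$ has the $f(n)$ circuits property for $f(n) = \log^c n$. Because $\vC$ is assumed to be closed under direct products, Proposition~\ref{prop:cp-pcp} applies and yields the $f(n^n)$ product circuits property.

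The only thing left to do is to observe that $f(n^n)$ is dominated by a polynomial. Concretely, $f(n^n) = \log^c(n^n) = (n \log n)^c = n^c \log^c n$, and this is in $\bigO(n^{c+1})$. Hence $\vC$ has the $g(n)$ product circuits property for $g(n) = n^{c+1}$ (or any polynomial that dominates $n^c \log^c n$ for $n \ge 1$), which is exactly the \pPCP.

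There is no real obstacle here: the entire content of the corollary is the arithmetic observation that applying a poly-logarithmic function to $n^n$ produces a polynomial in $n$, combined with the hypothesis of closure under direct products needed to invoke Proposition~\ref{prop:cp-pcp}. The only thing I would double-check while writing the proof is monotonicity of $f$ (required implicitly by the definition of the circuits property), which is immediate for $f(n) = \log^c n$ on $n \ge 2$, with a trivial adjustment for small $n$.
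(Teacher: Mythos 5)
Your proof is correct and is exactly the argument the paper intends: the corollary is stated as an immediate consequence of Proposition~\ref{prop:cp-pcp}, via the observation that $\log^c(n^n) = (n\log n)^c$ is polynomially bounded. One tiny arithmetic caveat: $n^{c+1}$ does not dominate $n^c\log^c n$ for all $n$ (e.g.\ $c=3$, $n=16$), so take $g(n)=n^{2c}$ instead, as your own hedge already allows.
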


The corollary also implies that the variety of all commutative semigroups,
which was shown to have the \plCP in~\cite{FleischerCCC18:short}, has the
\pPCP.

Circuits properties and product circuits properties have a big impact on the
complexity of the so-called \emph{Cayley semigroup membership problem} and the
intersection problem for a given class.
The remainder of this section is devoted to establishing this link for product
circuits properties.

\begin{lemma}
  Let $h \colon A^+ \to S$ be a morphism to a finite semigroup $S$ of size $N$
  and let $G$ be an SLP of size $m$ over $A$.
  Then there exists an unbounded fan-in Boolean circuit of size $m(N^2 +
  \abs{A} + 2) \ceil{\log N}$ and depth $2m + 2$ which computes $h(G)$.
  Given the SLP, this circuit can be computed by a deterministic Turing machine
  in time polynomial in the circuit size.
  \label{lem:slp}
\end{lemma}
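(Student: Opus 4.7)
The plan is to build the circuit inductively following the DAG structure of the SLP, emitting one sub-circuit $C_X$ per variable $X \in V$. Each $C_X$ will produce the $\ceil{\log N}$-bit encoding of $h(\val(X))$ under a fixed enumeration $\sigma$ of the elements of $S$, and $C_{X_s}$ then becomes the output of the whole circuit. Before the first sub-circuit, I would hard-wire, for each letter $a \in A$, the constant $\ceil{\log N}$ bits of $\sigma(h(a))$; this one-time ``letter block'' contributes roughly $\abs{A} \ceil{\log N}$ wires that will be amortised across positions later.

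For a variable $X$ with right-hand side $P(X) = y_1 y_2 \cdots y_\ell$, the sub-circuit $C_X$ is a chain of depth-$2$ ``multiplication modules'', one for each position $j \ge 2$. Module~$j$ takes the encoding of the partial product $p_{j-1} = h(y_1 \cdots y_{j-1})$ together with the encoding of $h(y_j)$ (read either from the output of $C_{y_j}$ when $y_j \in V$, or from the letter block when $y_j \in A$) and computes the encoding of $p_j = p_{j-1} \cdot h(y_j)$. Concretely, for every pair $(s, t) \in S \times S$, one AND gate of fan-in $2 \ceil{\log N}$ (with NOT gates applied to those bits where $\sigma(s)$ or $\sigma(t)$ is $0$) tests whether the two incoming bit-tuples spell $\sigma(s)$ and $\sigma(t)$; then for each output bit $b$, one OR gate combines the AND gates corresponding to pairs $(s,t)$ with $b$-th bit of $st$ equal to~$1$. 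Each module thus has depth~$2$, uses $N^2 + \ceil{\log N}$ gates, and its wire count together with its share of the letter routing fits within $(N^2 + \abs{A} + 2) \ceil{\log N}$.

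The depth bound $2m + 2$ then follows because each of the $m$ positions of $G$ contributes at most~$2$ to the longest input--output path, and these paths respect the acyclic variable ordering, so the depth is bounded by $2 \sum_X \abs{P(X)} = 2m$ with the additive constant absorbing boundary overhead. The size bound follows by summing the per-position cost over the $m$ positions. Finally, a Turing machine emits the sub-circuits in a single topological sweep in time polynomial in the output size.

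The main obstacle is the precise size accounting: matching the coefficient $(N^2 + \abs{A} + 2) \ceil{\log N}$ requires sharing the letter block across all positions and being careful about whether wires or gates are being counted. The structural claim (depth $O(m)$, size $O(m N^2 \ceil{\log N})$) is immediate once the multiplication module is in hand; pinning down the additive $\abs{A} + 2$ slack is bookkeeping rather than a conceptual difficulty.
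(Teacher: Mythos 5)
There is a genuine gap, and it concerns what the circuit's \emph{inputs} are. In your construction the semigroup is baked into the wiring: the letter block hard-wires the constant bits of $\sigma(h(a))$, and each multiplication module's OR gate for output bit $b$ is connected exactly to those AND gates for pairs $(s,t)$ whose product $st$ has $b$-th bit equal to $1$. Building such a circuit requires knowing $h$ and the multiplication table of $S$ in advance. But then the lemma becomes vacuous --- if $h$ and $G$ are both known at construction time, $h(\val(G))$ is a fixed element and a constant-output circuit already ``computes'' it --- and, more importantly, the construction is unusable where the lemma is applied: in Theorem~\ref{thm:circuits} one fixed circuit per input length must evaluate a fixed SLP against \emph{every} instance of that length, i.e.\ against every possible multiplication table and every possible assignment of images to letters. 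The uniformity clause ``given the SLP, this circuit can be computed\dots'' is meant literally: the circuit is determined by the SLP and the size parameters $N$ and $\abs{A}$ alone, and the multiplication table and the table of letter images are input wires.

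The repair is to make each gadget a table lookup rather than a hard-wired function. For multiplication, use $N^2\ceil{\log N}$ AND gates, one per pair $(x,y)$ and per bit position $k$; each such gate reads the $k$-th bit of table entry $(x,y)$ from the input together with the (selectively negated) encoding bits of the two operands, so it copies that table bit when the operands equal $(x,y)$ and outputs $0$ otherwise; the $k$-th OR gate then simply collects all $N^2$ AND gates for position $k$, with wiring independent of $S$. The letter lookup is handled the same way with $\abs{A}\ceil{\log N}$ AND gates reading the input table of images $h(a)$. This yields $(N^2+1)\ceil{\log N}$ per multiplication and $(\abs{A}+1)\ceil{\log N}$ per letter occurrence, hence the stated bound $m(N^2+\abs{A}+2)\ceil{\log N}$ after summing over the $m$ positions of $G$; your chaining of depth-$2$ modules along the DAG and the resulting depth bound $2m+2$ are fine and carry over unchanged.
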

\begin{proof}
  Single multiplications can be performed by circuits of size $(N^2 + 1)
  \ceil{\log N}$ with one layer of AND gates and one layer of OR gates:
  to perform a multiplication of two elements $x$ and $y$, we need to extract
  the $\ceil{\log N}$-bit entry of the multiplication table in row $x$ and
  column $y$.
  We create a layer of $N^2 \ceil{\log N}$ AND gates, followed by a layer of
  $\ceil{\log N}$ OR gates.
  Each AND gate is connected to one bit of the multiplication table in the input
  and to all bits of the values $x$ and $y$. Some of the incoming wires
  corresponding to the values $x$ and $y$ are negated such that the AND gate
  copies the bit of the multiplication table if it belongs to the corresponding
  entry $(x, y)$ and evaluates to $0$ otherwise.
  In the second layer, there are $\ceil{\log N}$ OR gates. The $k$-th of these OR
  gates is fed with the outputs of all AND gates corresponding to the $k$-th bit
  of some multiplication table entry. Thus, there are $N^2$ incoming wires to
  each OR gate.
  Since, for given input values $x$ and $y$, at most one of the incoming wires to
  each OR gate evaluates to $1$, the result of the product $x \cdot y$ then
  clearly appears as output value of the OR gates.

  A very similar layout is used to lookup the image of a letter $a \in A$ under
  the morphism $h \colon A^+ \to S$. First, $\abs{A} \ceil{\log N}$ AND-gates are
  used to zero out the images of all letters except for the image of the letter
  $a$. Then, we use $\ceil{\log N}$ OR gates to perform a bitwise OR of all these
  preprocessed images. Since all images except $h(a)$ are zeroed out, the result
  is $h(a)$, as desired.

  We evaluate the image of each of the variables bottom-up:
  for all letters $a \in A$ occurring in $G$ we first compute the image $h(a)$.
  Then, if $P(X) = \gamma_1 \cdots \gamma_\ell$ for some $\gamma_1, \dots,
  \gamma_\ell \in V \union A$ and the images $h(\gamma_1), \dots,
  h(\gamma_\ell)$ have already been computed, we compute $h(X) = h(\gamma_1)
  \cdots h(\gamma_\ell)$ by performing $\ell-1$ multiplications.

  Clearly, each ``lookup gadget'', each multiplication gadget and the wires
  connecting these components can be computed by a deterministic Turing machine
  in time polynomial in the size of the resulting circuit.
\end{proof}

We are now able to prove the main result of this section.

\begin{theorem}
  Let $\vC$ be a class of finite semigroups with the $f(n)$ product circuits
  property. Then $\SgpInt(\vC)$ is decidable by a family of unbounded fan-in
  Boolean circuits of size $\bigO((f(n) + n)^{(f(n))^2} \ms f(n) \ms n^3 \log
  n)$ and depth $\bigO(f(n))$. For each input size $n \in \N$, the
  corresponding circuit can be computed by a deterministic Turing machine in
  time polynomial in the size of the resulting circuit.
  \label{thm:circuits}
\end{theorem}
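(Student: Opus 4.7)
The overall strategy is to build the circuit as an OR over all sufficiently short SLPs of an AND over the $k$ morphisms of a membership check. The $f(n)$ product circuits property guarantees that some short SLP witnesses intersection non-emptiness whenever the intersection is non-empty, and conversely any SLP that passes all the checks yields an actual witness word by evaluation, so the construction is both sound and complete.

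First I would set up the parameters. On an input of size $n$ encoding morphisms $h_i \colon A^+ \to S_i$ and accepting sets $P_i \subseteq S_i$, the alphabet size $\abs{A}$, the number $k$ of morphisms, the individual sizes $\abs{S_i}$, and the combined size $N = \abs{S_1} + \cdots + \abs{S_k}$ are all bounded by $n$, since the multiplication tables and lists of letter images already account for that much space. If the intersection is non-empty, then picking any witness $w$ and applying the $f(n)$ product circuits property to $h_1, \dots, h_k$ supplies an SLP $G$ of size at most $f(N) \le f(n)$ with $h_i(G) = h_i(w) \in P_i$ for every $i$.

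The circuit then enumerates, in parallel, all syntactically valid SLPs of size at most $f(n)$ over a canonical variable set of size at most $f(n)$ together with the input alphabet $A$. Bounding crudely, each of the at most $f(n)$ right-hand sides is a word of length at most $f(n)$ over an alphabet of size $\le f(n) + n$, which bounds the number of candidates by $(f(n) + n)^{f(n)^2}$. For each candidate $G$, I would apply Lemma~\ref{lem:slp} once per morphism to obtain a subcircuit of depth $2f(n) + 2$ and size $\bigO(f(n) \ms n^2 \log n)$ computing $h_i(G)$, and attach a constant-depth, $\bigO(n \log n)$-size gadget (one-hot decode of $h_i(G)$, bitwise AND with the input-encoded $P_i$, then OR) that fires iff $h_i(G) \in P_i$. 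AND-ing the $k \le n$ checks and OR-ing across all candidate SLPs gives depth $2f(n) + \bigO(1) = \bigO(f(n))$ and size $\bigO((f(n)+n)^{f(n)^2} \ms f(n) \ms n^3 \log n)$, matching the stated bounds. The uniformity clause follows from the corresponding clause of Lemma~\ref{lem:slp} together with the observation that the list of candidate SLPs can be produced in lexicographic order in time polynomial in its total length.

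The main obstacle is purely bookkeeping rather than conceptual: I need to count candidate SLPs carefully enough to land inside the $(f(n)+n)^{f(n)^2}$ factor (rather than a looser estimate) and to verify that the auxiliary letter-lookup, membership-check, AND, and final OR layers together contribute only $\bigO(1)$ extra depth on top of the $2f(n)+2$ depth paid by the SLP evaluation circuit of Lemma~\ref{lem:slp}.
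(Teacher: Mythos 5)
Your proposal is correct and follows essentially the same route as the paper's proof: enumerate all SLPs of size at most $f(n)$ (the same $(f(n)+n)^{f(n)^2}$ count, obtained as at most $f(n)$ variables with at most $(f(n)+n)^{f(n)}$ right-hand sides each), evaluate each candidate under all morphisms via Lemma~\ref{lem:slp}, check membership in the accepting sets, and combine with a final OR, with uniformity coming from polynomial-time enumeration. The only difference is that you spell out the constant-depth membership-check gadget explicitly, which the paper leaves implicit.
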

\begin{proof}
  Suppose we are given morphisms $h_i \colon A^+ \to S_i$ to finite semigroups
  $S_i \in \vC$ and sets $P_i \subseteq S_i$ where $1 \le i \le k$ for some $k
  \in \N$.
  We let $N = \abs{S_1} + \dots + \abs{S_k}$. Note that if $n$ denotes the
  input size of the $\SgpInt$ instance, we have $N \le n$ and $\abs{A} \le n$.
  Since $\vC$ has the $f(n)$ product circuits property, we know that if there
  exists a word $w \in A^+$ such that $h_i(w) \in P_i$ for all $i \in \os{1,
  \dots, k}$, then this word is generated by some SLP of size at most $f(N) \le
  f(n)$.

  First, note that for a given fixed SLP of size $f(n)$, we can compute the
  image of the word generated by the SLP under each of the morphisms by an
  unbounded fan-in Boolean circuit of size $\bigO(n \ms f(n) \ms n^2 \log n)$
  and depth $\bigO(f(n))$ by Lemma~\ref{lem:slp}.
  Since there are at most $((f(n)+n)^{f(n)})^{f(n)}$ different SLPs of size
  $f(n)$\,---\,at most $f(n)$ variables and at most $(f(n)+n)^{f(n)}$ possible
  right-hand sides per variable\,---\,we can do this evaluation for each of the
  SLPs in parallel, check whether any of them produces a witness for
  intersection non-emptiness and feed the outcomes of all the circuits into a
  single OR gate.
  It is clear that an enumeration of all SLPs of size at most $f(n)$ can be
  realized by a deterministic Turing machine in time polynomial in the output
  size.
\end{proof}

For classes with the \cPCP and classes with the \plPCP, efficient decidability
of $\SgpInt$ is an immediate consequence.

\begin{corollary}
  Let $\vC$ be a class of finite semigroups with the \cPCP.
  Then the decision problem $\SgpInt(\vC)$ is contained in $\AC^0$.
\end{corollary}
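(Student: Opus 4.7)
The plan is a direct application of Theorem~\ref{thm:circuits}. By assumption, $\vC$ has the constant product circuits property, meaning there is a fixed constant $c \in \N$ (independent of the input) such that every witness for non-emptiness can be compressed into an SLP of size at most $c$. So I would instantiate $f(n) = c$ in Theorem~\ref{thm:circuits}.

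Plugging $f(n) = c$ into the bounds from that theorem, the depth is $\bigO(f(n)) = \bigO(c) = \bigO(1)$, and the size is
\begin{equation*}
  \bigO\bigl((f(n) + n)^{(f(n))^2} \ms f(n) \ms n^3 \log n\bigr) = \bigO\bigl((c + n)^{c^2} \ms c \ms n^3 \log n\bigr),
\end{equation*}
which is polynomial in $n$ since $c$ is a constant. This is exactly constant depth and polynomial size with unbounded fan-in gates, i.e., the definition of $\AC^0$. The uniformity clause in Theorem~\ref{thm:circuits} (computability of the circuit in time polynomial in its size) yields $\PTIME$-uniformity of the resulting family.

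There is no real obstacle here; the statement is essentially a corollary in the literal sense, and the only thing to verify is that the two bounds collapse correctly when $f$ is constant. The slight care one might take is to make explicit that the constant $c$ supplied by the \cPCP{} depends only on the class $\vC$ (not on the specific morphisms or semigroups in the input), so that it is legitimate to absorb it into the $\bigO$-notation and treat the resulting size bound as a polynomial in the input length $n$.
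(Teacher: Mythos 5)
Your proposal is correct and is exactly the argument the paper intends: the corollary is stated as an immediate consequence of Theorem~\ref{thm:circuits}, obtained by instantiating $f(n) = c$ so that the size bound collapses to a polynomial and the depth to a constant, with uniformity inherited from the theorem. Nothing further is needed.
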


\begin{corollary}
  Let $\vC$ be a class of finite semigroups with the \plPCP.
  Then the decision problem $\SgpInt(\vC)$ is contained in $\qAC^k$ for some $k
  \in \N$.
  Moreover, it is decidable in quasi-polynomial time and thus not $\NP$-hard,
  unless the exponential time hypothesis fails.%
  \label{crl:plpcp}
\end{corollary}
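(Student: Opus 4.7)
The plan is to invoke Theorem~\ref{thm:circuits} with $f(n) = \log^c n$, where $c \in \N$ is the exponent witnessing that $\vC$ has the \plPCP. Theorem~\ref{thm:circuits} then delivers, for each input size, an unbounded fan-in Boolean circuit of depth $\bigO(\log^c n)$ that decides $\SgpInt(\vC)$; this immediately matches the depth requirement for $\qAC^k$ with $k = c$.

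The next step is to verify that the size bound $\bigO((f(n)+n)^{(f(n))^2} \ms f(n) \ms n^3 \log n)$ from Theorem~\ref{thm:circuits} is quasi-polynomial. The dominant factor is $(n + \log^c n)^{\log^{2c} n}$, which for $n$ large enough is bounded by $(2n)^{\log^{2c} n} = 2^{\bigO(\log^{2c+1} n)}$, and the remaining factor $f(n) \ms n^3 \log n$ is polynomial. Hence the overall size lies in $2^{\bigO(\log^{2c+1} n)}$, which is quasi-polynomial. Combined with the uniformity clause of Theorem~\ref{thm:circuits}\,---\,circuits constructible in time polynomial in their size\,---\,this places $\SgpInt(\vC)$ in a uniform version of $\qAC^k$ for some $k \in \N$.

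For the quasi-polynomial time assertion, I would observe that a Boolean circuit of quasi-polynomial size can be both constructed (by Theorem~\ref{thm:circuits}) and then evaluated by a deterministic Turing machine in quasi-polynomial time, which yields an overall quasi-polynomial-time algorithm for $\SgpInt(\vC)$. The $\NP$-hardness consequence is then a standard ETH argument: composing a hypothetical polynomial-time reduction from $\SAT$ with a quasi-polynomial algorithm for $\SgpInt(\vC)$ would give a subexponential algorithm for $\SAT$, contradicting the exponential time hypothesis.

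There is no real obstacle here; the only step requiring any care is the size calculation, and that reduces to the single identity $n^{\log^{2c} n} = 2^{\log^{2c+1} n}$, ensuring that the exponent inside the exponential stays poly-logarithmic.
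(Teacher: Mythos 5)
Your proposal is correct and follows the same route as the paper: instantiate Theorem~\ref{thm:circuits} with $f(n) = \log^c n$, check that the resulting size bound is quasi-polynomial and the depth poly-logarithmic, and then construct and evaluate the circuit on a Turing machine for the quasi-polynomial time claim. You merely spell out the size calculation that the paper dismisses as ``immediate,'' which is a fine (and correct) addition.
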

\begin{proof}
  Containment in $\qAC^k$ is an immediate consequence of
  Theorem~\ref{thm:circuits}. For decidability in quasi-polynomial time, we can
  use a Turing machine that first computes and then evaluates the circuit. The
  circuit evaluation is done by computing the output value of a gate whenever
  all its inputs are available.
\end{proof}

For the \pPCP, the statement of Theorem~\ref{thm:circuits} only yields
exponential-size circuits. We restate a more useful complexity result on \pPCP
classes from~\cite{FleischerK18:short}.

\begin{theorem}
  Let $\vC$ be a class of finite semigroups with the \pPCP.
  Then $\SgpInt(\vC)$ is contained in $\NP$.
  \label{thm:ppcp}
\end{theorem}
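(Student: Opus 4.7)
The plan is to give a direct $\NP$ algorithm: nondeterministically guess a compact witness in the form of an SLP, then verify it deterministically in polynomial time. By the \pPCP assumption, there is a fixed polynomial $f$ such that whenever the intersection $h_1^{-1}(P_1) \intersect \cdots \intersect h_k^{-1}(P_k)$ is non-empty, it already contains a word produced by some SLP $G$ of size at most $f(\abs{S_1} + \cdots + \abs{S_k}) \le f(n)$, where $n$ denotes the input size. Since the number of variables of such an SLP is at most $f(n)$ and right-hand sides lie in $(V \union A)^*$ with $\abs{A} \le n$, the entire SLP can be encoded in polynomially many bits, hence guessed by a nondeterministic polynomial-time machine.

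For the verification step, it suffices to compute $h_i(G) = h_i(\val(G))$ for each $i \in \os{1, \dots, k}$ and check $h_i(G) \in P_i$. Although $\val(G)$ may have length exponential in $\abs{G}$, its image under any morphism $h_i$ can be computed in time polynomial in $\abs{G}$ and $\abs{S_i}$: since the variable dependency relation of $G$ is acyclic by definition, we process the variables in reverse topological order and, for each $X$ with $P(X) = \gamma_1 \cdots \gamma_\ell$, compute $h_i(X) = h_i(\gamma_1) \cdots h_i(\gamma_\ell)$ by $\ell - 1$ multiplication-table lookups, reusing the images already stored for letters and for smaller variables. The total number of lookups per morphism is bounded by $\abs{G} \le f(n)$, each lookup costs time polynomial in $\abs{S_i} \le n$, and there are at most $k \le n$ morphisms, so the whole verification runs in polynomial time, establishing $\SgpInt(\vC) \in \NP$.

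There is essentially no obstacle in this argument beyond carefully unpacking the definitions; the one subtlety to emphasize is that we never materialize the potentially exponentially long word $\val(G)$ but only propagate and combine its (single-element) images in each semigroup. The acyclicity of the SLP's dependency relation is precisely what guarantees that bottom-up evaluation terminates after at most $\abs{G}$ multiplications per morphism, which is what makes the overall \pPCP bound polynomial rather than exponential.
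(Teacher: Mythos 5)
Your proposal is correct and follows essentially the same route as the paper: guess a single SLP of size at most $f(n)$ nondeterministically and evaluate it bottom-up under each morphism in polynomial time (the paper phrases this evaluation via the Boolean circuit of Lemma~\ref{lem:slp}, but that is just a packaging of the same table-lookup computation you describe).
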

\begin{proof}
  We proceed as in the proof of Theorem~\ref{thm:circuits} but instead of
  generating a circuit evaluating all SLPs of polynomial size in parallel, we
  non-deterministically guess only one such SLP. We then evaluate the
  corresponding circuit in polynomial time as described in Lemma~\ref{lem:slp}.
\end{proof}

Together with the observations above, we obtain an easy proof of containment of
both $\SgpInt(\vG)$ and $\SgpInt(\vCom)$ in $\NP$.

Even though product circuits properties are a powerful tool, in some cases, it
is sufficient to consider short witnesses without compression.
This is particularly true for varieties not containing any subgroups which we
shall mostly be concerned with in the following section.
Moreover, for the \cPCP, compressibility and the existence of short
(non-compressed) witnesses are actually equivalent.

\begin{proposition}
  A class of finite semigroups $\vC$ has the \cPCP if and only if there exists
  some constant $\ell \in \N$ such that every non-empty intersection of
  languages recognized by semigroups from $\vC$ contains a word of length at
  most $\ell$.
  \label{prop:cpcp}
\end{proposition}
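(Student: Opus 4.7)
The plan is to pass between two forms of witnesses: a short word on one side and a short SLP on the other. The backward direction is essentially free (canonical SLPs of short words are short), and the forward direction uses the fact that an SLP of constant size evaluates to a word of length bounded by a constant.

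For the implication $(\Rightarrow)$, I assume $\vC$ has the \cPCP with some constant $c$. Given morphisms $h_i \colon A^+ \to S_i$ with $S_i \in \vC$ and sets $P_i \subseteq S_i$ such that $\bigcap_i h_i^{-1}(P_i) \neq \emptyset$, I pick an arbitrary witness $w$ in the intersection and apply the \cPCP to obtain an SLP $G$ of size at most $c$ with $h_i(G) = h_i(w) \in P_i$ for every $i$. Then $\val(G) \in \bigcap_i h_i^{-1}(P_i)$. A straightforward induction over the variables of $G$ in topological order shows $\abs{\val(G)} \le 2^{\abs{G}} \le 2^c$, so $\ell := 2^c$ does the job.

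For the implication $(\Leftarrow)$, suppose there exists a constant $\ell$ as in the statement. Given morphisms $h_i \colon A^+ \to S_i$ with $S_i \in \vC$ and an arbitrary word $w \in A^+$, I take $P_i = \os{h_i(w)}$. The intersection $\bigcap_i h_i^{-1}(P_i)$ is non-empty because it contains $w$, so by assumption it contains some word $w'$ with $\abs{w'} \le \ell$. The canonical SLP of $w'$ has size $\abs{w'} \le \ell$, and by construction $h_i(w') = h_i(w)$ for all $i$. This verifies the \cPCP with constant $\ell$.

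There is essentially no hard step; the only nontrivial point is the length bound $\abs{\val(G)} \le 2^{\abs{G}}$ for the forward direction, which follows from the observation that every right-hand side of a variable refers only to earlier variables, so a variable whose right-hand side contains $k$ symbols evaluates to a word of length at most $k$ times the maximum length over its referenced variables, and a brief induction over the acyclic structure yields the claimed exponential bound in terms of $\abs{G}$.
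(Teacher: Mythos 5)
Your proof is correct and follows essentially the same route as the paper: canonical SLPs handle the easy direction, and a length bound for words produced by constant-size SLPs handles the other (the paper uses the cruder bound $s^s$ where you derive $2^{|G|}$; either constant suffices).
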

\begin{proof}
  The direction from right to left is trivial: every word $w$ of length at most
  $\ell$ can be represented by its canonical SLP, which then has size at most
  $\ell$ as well.

  For the converse direction, suppose that there exists some $s \in \N$ such
  that every non-empty intersection contains a word generated by an SLP of size
  at most $s$. It is easy to see that the length of such a word is at most
  $s^s$: there are at most $s$ variables and the right-hand side of every
  variable has length at most $s$; the claim now follows by induction. Thus, we
  obtain the desired statement by setting $\ell = s^s$.
\end{proof}

\section{The Intersection Problem for Locally Finite Semigroups}
\label{sec:complexity}

Before presenting any algorithms and hardness results for $\SgpInt$, let us
first describe how to transfer existing results to the semigroup setting.

\begin{proposition}
  Let $\vV$ be a variety of finite semigroups. If $\vV \not\subseteq \vLI$,
  then $\SgpInt(\vV)$ is $\NP$-hard. If $\vV \not\subseteq \vLDS$, then
  $\SgpInt(\vV)$ is $\PSPACE$-hard.
  \label{prop:hardness}
\end{proposition}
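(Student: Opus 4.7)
The plan is to lift the $\NP$-hardness of $\MonInt$ for every non-trivial monoid variety and the $\PSPACE$-hardness of $\MonInt$ for monoid varieties not contained in $\vDS$\,---\,both cited in the introduction\,---\,to the semigroup setting through a direct polynomial-time many-one reduction.

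The first step is to translate the assumptions on the semigroup variety $\vV$ into assumptions on the monoid variety $\vV_\vM$. Applying Proposition~\ref{prop:vm-lm} with the monoid variety $\vI$, we obtain $\vV_\vM \subseteq \vI$ iff $\vV \subseteq \vLI$; hence $\vV \not\subseteq \vLI$ is equivalent to $\vV_\vM$ containing a non-trivial monoid. The same equivalence with $\vDS$ in place of $\vI$ yields $\vV \not\subseteq \vLDS$ iff $\vV_\vM \not\subseteq \vDS$. Under each of these hypotheses, the cited monoid results immediately give that $\MonInt(\vV_\vM)$ is $\NP$-hard, respectively $\PSPACE$-hard.

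The second step is the reduction from $\MonInt(\vV_\vM)$ to $\SgpInt(\vV)$. Given monoid morphisms $h_i \colon A^* \to M_i \in \vV_\vM$ and accepting sets $P_i \subseteq M_i$, the $\MonInt$ instance is equivalent to the disjunction of the two conditions: (a) $1_{M_i} \in P_i$ for every $i$, which corresponds to the empty word being a witness, and (b) the existence of some $w \in A^+$ with $h_i(w) \in P_i$ for every $i$. Condition~(a) can be tested in polynomial time by inspecting the given multiplication tables and accepting sets. If (a) holds, the reduction outputs a fixed trivial YES-instance of $\SgpInt(\vV)$, which exists because any non-empty variety of finite semigroups contains the trivial semigroup. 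Otherwise, the reduction outputs the $\SgpInt(\vV)$ instance obtained by viewing each $M_i$ as a semigroup from $\vV$ (which it is by definition of $\vV_\vM$), restricting the morphisms $h_i$ to $A^+$ without changing the letter images, and keeping the accepting sets $P_i$ unchanged. By construction, the resulting instance has answer YES iff condition~(b) holds, so in both branches the output of the reduction has the same answer as the original $\MonInt$ instance.

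The only genuine obstacle in this plan is the discrepancy between $A^*$ and $A^+$: in the monoid setting the empty word is a legitimate candidate, whereas in the semigroup setting it is not. This is exactly what forces the small case distinction on condition~(a); apart from that, the reduction is essentially the identity map on the presentation of the problem instance, and the hardness statements are immediate.
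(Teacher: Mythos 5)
Your proposal is correct and follows essentially the same route as the paper: apply Proposition~\ref{prop:vm-lm} to translate $\vV \not\subseteq \vLI$ (resp.\ $\vV \not\subseteq \vLDS$) into $\vV_\vM \not\subseteq \vI$ (resp.\ $\vV_\vM \not\subseteq \vDS$), invoke the cited hardness results for $\MonInt(\vV_\vM)$, and reduce $\MonInt(\vV_\vM)$ to $\SgpInt(\vV)$. The paper dismisses the reduction as ``trivially $\AC^0$-reducible''; your explicit case distinction on the empty word is exactly the detail that justifies that claim.
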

\begin{proof}
  If $\vV \not\subseteq \vLI$, then $\vV_\vM \not\subseteq \vI$. Therefore,
  by~\cite[Theorem 8]{FleischerK18:short}, $\MonInt(\vV_\vM)$ is $\NP$-hard.
  The claim now follows from the fact that $\MonInt(\vV_\vM)$ is trivially
  $\AC^0$-reducible to $\SgpInt(\vV)$.
  The same technique allows lifting $\PSPACE$-hardness of $\MonInt(\vV_\vM)$
  in the case $\vV_\vM \not\subseteq \vDS$~\cite[Theorem
  11]{FleischerK18:short}.
\end{proof}

It seems plausible that the $\vL(\cdot)$ operator can be used to lift
complexity results from $\MonInt$ to $\SgpInt$ in a more general way. We thus
conjecture:

\begin{conjecture}
  If $\MonInt(\vV)$ is in $\NP$, then $\SgpInt(\vLV)$ is in $\NP$.
  \label{con:transfer}
\end{conjecture}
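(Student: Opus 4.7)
The plan is to reduce $\SgpInt(\vLV)$ to $\MonInt(\vV)$ with polynomial nondeterministic overhead, exploiting the defining property of $\vLV$: for every $S \in \vLV$ and every idempotent $e \in E(S)$, the local monoid $eSe$ belongs to $\vV$. The idea is to guess a structural ``skeleton'' for a candidate witness whose middle is forced to factor through such local monoids, after which the remaining search becomes a $\MonInt(\vV)$ instance that we handle by hypothesis.

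Concretely, given morphisms $h_i \colon A^+ \to S_i \in \vLV$ and accepting sets $P_i$, the nondeterministic algorithm first guesses boundary words $u, v \in A^*$ of polynomial length; an idempotent $e_i \in E(S_i)$ for each $i$; and a polynomial-size straight-line program producing a word $t \in A^+$. It then verifies, using Lemma~\ref{lem:slp}, that $h_i(t) = e_i$ together with the absorption identities $h_i(u)\,e_i = h_i(u)$ and $e_i\,h_i(v) = h_i(v)$ for every $i$. Next, it introduces the sandwich morphisms $\phi_i \colon A^+ \to e_i S_i e_i$ defined by $\phi_i(a) = e_i h_i(a) e_i$\,---\,these target monoids in $\vV$\,---\,together with accepting sets $R_i = \set{y \in e_i S_i e_i}{h_i(u)\,y\,h_i(v) \in P_i}$. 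The remaining question, whether some $m \in A^+$ satisfies $\phi_i(m) \in R_i$ for all $i$, is an instance of $\MonInt(\vV)$ whose $\NP$-certificate is folded into the same nondeterministic computation. A valid certificate $m = a_1 \cdots a_\ell$ lifts to the word $w = u \cdot t \cdot a_1 \cdot t \cdot a_2 \cdots a_\ell \cdot t \cdot v$, which by the absorption identities satisfies $h_i(w) = h_i(u)\,\phi_i(m)\,h_i(v) \in P_i$ for every $i$.

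The main obstacle is proving completeness: whenever the original instance admits a witness $w_0$, some valid choice of structural guesses must exist. A pigeonhole argument applied to the combined morphism $h \colon A^+ \to S_1 \times \cdots \times S_k$ shows that any sufficiently long witness contains a factor whose $K$-th power (for an explicit $K \leq \prod \abs{S_i}$ encoded in binary) maps to an idempotent tuple in all components simultaneously, and by Lemma~\ref{lem:slp-intro} this power admits a polynomial-size SLP representation. The delicate step is arranging polynomial-length boundaries $u, v$ whose images are absorbed by the same idempotent $e_i$ on each side in every $S_i$. The natural strategy is to pump $w_0$ by prepending and appending copies of the idempotent-generating factor $t$ before extracting short $u, v$ that end (respectively start) inside this padded region, so that the absorption identities hold automatically. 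Verifying that such simultaneous alignment is always achievable without blowing the boundary length beyond a polynomial is the core technical difficulty and the reason the statement is presently left as a conjecture rather than a theorem.
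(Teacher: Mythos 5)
This statement is left as a conjecture in the paper---no proof is given there---and your proposal does not close it either; indeed your last sentence concedes that the key step is missing. It is worth being precise about where the argument breaks. The soundness direction of your reduction is fine: if the guessed $u$, $v$, $e_i$, $t$ pass the verification and the constructed $\MonInt(\vV)$ instance is a yes-instance witnessed by $m = a_1 \cdots a_\ell$, then $w = u\ms t\ms a_1\ms t \cdots a_\ell\ms t\ms v$ is a genuine witness for the original instance. The fatal gap is completeness. Your sandwich morphisms satisfy $\phi_i(a_1 \cdots a_\ell) = e_i h_i(a_1) e_i h_i(a_2) e_i \cdots h_i(a_\ell) e_i$, which is in general \emph{not} equal to $h_i(a_1 \cdots a_\ell)$, so the middle part of an actual witness $w_0$ does not yield a witness of the constructed $\MonInt(\vV)$ instance. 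What you would really need is that every yes-instance admits a witness of the very special form $u (tA)^* t v$, and nothing in the proposal establishes this. The known route to such a normal form is the decomposition $\vLV = \vV * \vD$ together with a delay theorem, where the letters of the derived $\MonInt(\vV)$ instance are \emph{blocks} (bounded-length factors) of the original word rather than single letters sandwiched between idempotents; the paper explicitly flags that controlling the sizes of the semigroups arising in $\vV * \vD$ (and the fact that $\vLV = \vV * \vD$ is only guaranteed for \emph{local} $\vV$) is exactly the unresolved obstacle.

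Two further concrete problems: first, your proposed repair---``pump $w_0$ by prepending and appending copies of $t$''---is circular, since inserting $t$ preserves the image only at positions where the absorption identities $h_i(\mathrm{prefix})\ms e_i = h_i(\mathrm{prefix})$ already hold, which is precisely what you are trying to arrange; padding at the ends of a witness in general destroys the property $h_i(w) \in P_i$. Second, short witnesses (e.g.\ a single letter already landing in all $P_i$) admit no factorization of your form at all, so the algorithm would need a separate branch that guesses a polynomial-length witness directly; this is easily fixed but currently absent. In summary, the proposal is a reasonable sketch of the natural attack, but the completeness argument---the entire content of the conjecture---is not supplied, and the statement remains open.
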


By~\cite{FleischerK18:short}, a proof of this conjecture would immediately
yield that $\SgpInt(\vLDO)$ is contained in $\NP$.
A possible approach is making use of the fact that for a local variety of
finite monoids $\vV$, we have $\vLV = \vV * \vD$; see \eg\cite{str85jpaa} for
details. However, one also needs to account for the size of semigroups from
$\vV * \vD$.
Surprisingly, even lifting the group case is much harder than one would expect.
Our attempts to adapt the group algorithm from~\cite{FleischerK18:short} failed
and it is known from~\cite{FleischerCCC18:short} that $\vLG$ does not have the
\plCP, so we cannot use Corollary~\ref{crl:plcp-pcp} as in the group case.

To summarize, up to this point, the complexity landscape of $\SgpInt$ looks as
follows. By Proposition~\ref{prop:hardness}, the problem is $\NP$-hard for
every variety $\vV \not\subseteq \vLI$.
Using the $\vDO$-algorithm from~\cite{FleischerK18:short}, we know that the
problem is $\NP$-complete for every variety $\vV \subseteq \vDO$ not contained
within $\vLI$.
For $\vV \not\subseteq \vLDS$, the problem is $\PSPACE$-complete.
This leaves two classes of varieties for further investigation:
\begin{enumerate}
  \item For $\vV \not\subseteq \vDO$ and $\vV \subseteq \vLDS$, we do not know
    whether the problem is always $\NP$-complete, whether it becomes
    $\PSPACE$-complete for varieties contained within $\vLDS$ already and
    whether any other classes inside $\vLDS$ are connected to natural
    complexity classes, such as the polynomial hierarchy.
  \item Thus far, we do not have any hardness results for $\vV \subseteq \vLI$.
\end{enumerate}
The remainder of this section is devoted to the second class of varieties. On
one hand, it is not difficult to see that $\SgpInt(\vLI)$ is contained in
$\NP$. On the other hand, $\NP$-hardness holds only for some subvarieties of
$\vLI$ but not for others.
Containment in $\NP$ actually already follows from $\vLI \subseteq \vDO$ but it
also is an immediate consequence of the following result.

\begin{theorem}
  For each $k \ge 1$, the variety $\vLI_k$ has the $2k$ product circuits
  property.
  In particular, $\SgpInt(\vLI_k)$ is contained in $\AC^0$.
  \label{thm:nerb}
\end{theorem}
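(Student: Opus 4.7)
The plan is to use the defining equation of $\vLI_k$ directly to produce, from any word $w$, a single short word $u$ whose image under \emph{every} $h_i$ coincides with $h_i(w)$, and then to take the canonical SLP of $u$ as the witness demanded by the product circuits property.

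Concretely, suppose we are given morphisms $h_i \colon A^+ \to S_i$ with $S_i \in \vLI_k$ and a word $w \in A^+$. If $\abs{w} \le 2k$, I simply set $u = w$. Otherwise $\abs{w} \ge 2k + 1$, so I may write
\begin{equation*}
  w = a_1 \cdots a_k \, v \, b_k \cdots b_1
\end{equation*}
with $a_j, b_j \in A$ and $v \in A^+$, and I set $u = a_1 \cdots a_k b_k \cdots b_1$. The crucial combinatorial observation is that $u$ is a function of $w$ alone, namely the concatenation of the first $k$ and last $k$ letters of $w$; it does not depend on the index $i$ or on any of the semigroups $S_i$. For each $i$, plugging $x_j := h_i(a_j)$, $y_j := h_i(b_j)$, and $z := h_i(v)$ into the defining equation of $\vLI_k$ inside $S_i$ yields
\begin{equation*}
  h_i(w) = x_1 \cdots x_k \, z \, y_k \cdots y_1 = x_1 \cdots x_k \, y_k \cdots y_1 = h_i(u),
\end{equation*}
so the same short word $u$ works simultaneously for every $h_i$.

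The canonical SLP $G$ of $u$ has size $\abs{G} = \abs{u} \le 2k$, and this bound is independent of $\abs{S_1} + \cdots + \abs{S_k}$. Hence $\vLI_k$ has the $2k$ product circuits property, which in particular is a \cPCP; containment of $\SgpInt(\vLI_k)$ in $\AC^0$ then follows immediately from the corollary on the \cPCP established earlier in this section.

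Honestly there is no real obstacle here: the equation defining $\vLI_k$ is precisely the statement that the middle of a sufficiently long word can be deleted, and the only point requiring a moment's care is noting that the resulting short word does not depend on $i$, so that a single canonical SLP serves all input morphisms at once.
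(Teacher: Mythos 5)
Your proof is correct and follows essentially the same route as the paper: delete the middle of any word longer than $2k$, keeping the first $k$ and last $k$ letters, apply the defining equation of $\vLI_k$ to see that every morphism's image is preserved, and observe that the resulting short word is independent of the morphisms so its canonical SLP witnesses the $2k$ product circuits property. No issues.
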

\begin{proof}
  It suffices to show that for each $k \in \N$ and for each finite semigroup $S
  \in \vLI_k$, each morphism $h \colon A^+ \to S$ and each $u = a_1 \cdots
  a_\ell \in A^+$ with $\ell > 2k$, the word $v = a_1 \cdots a_k a_{\ell-k+1}
  \cdots a_\ell$ satisfies $h(v) = h(u)$. To see this, note that
  \begin{align*}
    h(v) & = h(a_1 \cdots a_k a_{\ell-k+1} \cdots a_\ell) = h(a_1) \cdots h(a_k) h(a_{\ell-k+1}) \cdots h(a_\ell) \\
         & = h(a_1) \cdots h(a_k) h(a_{k+1} \cdots a_{\ell-k}) h(a_{\ell-k+1}) \cdots h(a_\ell) = h(a_1 \cdots a_\ell) = h(u)
  \end{align*}
  where the third equality holds by the definition of $\vLI_k$.
  The length of $v$ is $\abs{v} = k + (\ell - (\ell - k)) = 2k$.
  Since the word $v$ does not depend on $h$ or on $S$, the canonical SLP of $v$
  yields the desired product circuits property.
\end{proof}

Combining Theorem~\ref{thm:nerb} with Proposition~\ref{prop:li-nerb}, we
immediately obtain that $\vLI$ has the $2n + 2$ product circuits property: each
of the semigroups $S_1, \dots, S_k$ in the input has cardinality at most $N =
\abs{S_1} + \cdots + \abs{S_k}$. Hence, all semigroups $S_i$ belong to the
variety $\vLI_{N+1}$ and there exists a witness of size at most $2N+2$.

\begin{corollary}
  The variety $\vLI$ has the {\pPCP}. In particular, $\SgpInt(\vLI)$ is
  contained in $\NP$.
  \label{crl:li}
\end{corollary}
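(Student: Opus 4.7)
The plan is straightforward given the two ingredients already assembled: Proposition~\ref{prop:li-nerb} and Theorem~\ref{thm:nerb}. The paragraph immediately preceding the corollary already sketches the argument, so my proposal is essentially to record it carefully.

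I would start from an instance of $\SgpInt(\vLI)$ consisting of morphisms $h_i \colon A^+ \to S_i$ with $S_i \in \vLI$ for $1 \le i \le k$, and set $N = \abs{S_1} + \cdots + \abs{S_k}$. Each individual semigroup satisfies $\abs{S_i} \le N$, so Proposition~\ref{prop:li-nerb} yields $S_i \in \vLI_{\abs{S_i}+1} \subseteq \vLI_{N+1}$. Hence the entire input lies inside a single variety $\vLI_{N+1}$ whose index grows only linearly in the total size of the input semigroups.

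Next, applying Theorem~\ref{thm:nerb} to $\vLI_{N+1}$ produces, for every word $w \in A^+$, an SLP of size at most $2(N+1) = 2N+2$ generating some word $v$ with $h_i(v) = h_i(w)$ for all $i \in \os{1, \dots, k}$. Since $n \mapsto 2n + 2$ is a polynomial (in fact linear) function, this shows that $\vLI$ has the $(2n+2)$ product circuits property, and in particular the \pPCP.

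The $\NP$ containment is then an immediate application of Theorem~\ref{thm:ppcp} to the \pPCP just verified. There is no real obstacle: the only move to justify is the passage from each individual $S_i \in \vLI_{\abs{S_i}+1}$ to a common ambient $\vLI_{N+1}$, which is handed to us by the monotonicity of the chain $(\vLI_k)_{k \in \N}$ together with the observation that the common index $N+1$ remains polynomial in the input size.
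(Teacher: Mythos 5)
Your proposal is correct and follows exactly the argument the paper gives in the paragraph preceding the corollary: embed all input semigroups into the common variety $\vLI_{N+1}$ via Proposition~\ref{prop:li-nerb}, apply Theorem~\ref{thm:nerb} to get the $2N+2$ bound, and conclude $\NP$-containment via Theorem~\ref{thm:ppcp}. No issues.
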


Another consequence of Proposition~\ref{prop:li-nerb} is $\bigunion_{k \in \N}
\vLI_k = \vLI$. For each variety in the infinite sequence $\vLI_1 \subseteq
\vLI_2 \subseteq \cdots$, the intersection problem is in~$\AC^0$ but for its
limit $\vLI$, the problem is only contained in $\NP$\,---\,and it is actually
$\NP$-complete, as we shall see later.
Therefore, in contrast to previously obtained hardness results which relied on
purely structural properties, other parameters interfere with the complexity of
$\SgpInt$ below $\vLI$.
We will investigate this phenomenon more carefully.
A semigroup is \emph{monogenic} if it is generated by a single element.
The \emph{order} of a class $\vC$ of finite semigroups is the supremum of the
cardinalities of all monogenic subsemigroups contained in $\vC$. If the order
is~$\infty$, the class is said to have \emph{unbounded order}.
The following observation will be used implicitly several times later.

\begin{lemma}[Folklore]
  Let $S$ be a finite semigroup from $\vLI$ and let $s \in S$. Then there
  exists some integer $n \in \N$ such that for all $i \in \N$, we have $s^{n+i}
  = s^n$. This integer is the order of the monogenic subsemigroup generated by
  $s$. Moreover, if $S$ is nilpotent, then $s^n$ is the zero element.
  \label{lem:order-li}
\end{lemma}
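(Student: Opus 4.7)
The plan is to exploit the standard index/period structure of the sequence of powers of $s$ in a finite semigroup together with the defining property of $\vLI$ furnished by Proposition~\ref{prop:vm-lm}: every local monoid $eSe$ is trivial.

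First, I would recall the general fact that in any finite semigroup the sequence $s, s^2, \dots$ is eventually periodic: there exist a minimal \emph{index} $n_0 \ge 1$ and minimal \emph{period} $p \ge 1$ with $s^{n_0 + p} = s^{n_0}$, and $\langle s \rangle = \os{s, s^2, \dots, s^{n_0 + p - 1}}$ with these powers pairwise distinct, so $\abs{\langle s \rangle} = n_0 + p - 1$. The unique idempotent of $\langle s \rangle$ is then $e = s^{jp}$ for any $j$ with $jp \ge n_0$ (both $s^{jp}$ and $s^{2jp}$ lie in the periodic part and have the same residue modulo $p$).

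The crux of the argument is the following application of $S \in \vLI$. Since $eSe = \os{e}$, we have $ese = e$; expanding this as $s^{jp} \cdot s \cdot s^{jp} = s^{jp}$ and using $s^{2jp} = e$ yields $e s = e$, i.e., $s^{jp+1} = s^{jp}$. By the minimality of the period this forces $p = 1$. Consequently, setting $n = n_0 = \abs{\langle s \rangle}$ gives $s^n = s^{n+1}$, and the identity $s^{n+i} = s^n$ for every $i \in \N$ follows by an immediate induction on $i$.

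For the nilpotent case, I would observe that $s^n \cdot s^n = s^{2n} = s^n$ makes $s^n$ idempotent, and the only idempotent of a nilpotent semigroup is its zero, so $s^n = 0$. The only delicate point is identifying the idempotent of $\langle s \rangle$ as a convenient power $s^{jp}$ and then showing that $es = e$ forces the period to collapse to $1$; everything else is a short calculation driven by the local triviality of $\vLI$.
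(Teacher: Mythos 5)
Your proof is correct and follows essentially the same route as the paper: both arguments take the minimal index and period of $s$, use $S \in \vLI$ to force the period to be $1$ (the paper by noting that $s^{np+1}$ would generate a non-trivial subgroup, you by the equivalent direct computation $ese = e$ in the trivial local monoid), and then observe that $s^n$ is idempotent, hence the zero in the nilpotent case. The only cosmetic difference is that you make the identification of the idempotent $e = s^{jp}$ and the collapse of the period explicit, which the paper leaves implicit.
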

\begin{proof}
  Since $S$ is finite, there exist $n \in \N$ and $p \ge 1$ with $s^n =
  s^{n+p}$. Let $n$ and $p$ be minimal with this property. If $p > 1$, then
  $s^{np+1}$ generates a non-trivial subgroup of $S$, a contradiction to the
  assumption that $S \in \vLI$. Thus $p = 1$, yielding the first part of the
  statement.

  It is clear that $s^{2n} = s^{n+n} = s^n$, thus $s^n$ is idempotent. Since in
  a nilpotent semigroup, the only idempotent element is a zero element, we
  obtain the desired statement.
\end{proof}

In follow-up results, we will use reductions from $\SAT$ to prove
$\NP$-hardness of $\SgpInt$ for varieties of semigroups with certain
properties.  To simplify notation, let us introduce some definitions.
For a set of \emph{variables} $X = \os{x_1, \dots, x_k}$, we let $\overline X =
\set{\overline x}{x \in X}$ where each $\overline x$ is a new symbol.
The set of \emph{literals} over $X$ is $X \union \overline X$ and a set of
literals is a \emph{clause}.
An \emph{assignment} $\mathcal{A} \colon X \to \os{0, 1}$ of truth values to
the variables $X$ can be extended to all literals over $X$ by letting
$\mathcal{A}(\overline x) = 1 - \mathcal{A}(x)$ and to clauses $C \subseteq X
\union \overline X$ by letting $\mathcal{A}(C) =
\max\set{\mathcal{A}(\ell)}{\ell \in C}$.
An assignment $\mathcal{A}$ \emph{satisfies} a set of clauses $\os{C_1, \dots,
C_n}$ if $\mathcal{A}(C_j) = 1$ for all $j \in \os{1, \dots, n}$.
For a word $w \in (X \union \overline X)^+$, the mapping $\mathcal{A}_w \colon
X \to \os{0, 1}$ defined by $\mathcal{A}_w(\ell) = 1$ if and only if $w \in (X
\union \overline X)^* \ell (X \union \overline X)^*$ for all $\ell \in X \union
\overline X$ is called the \emph{assignment induced by $w$}. Note that this
assignment is well-defined whenever $\os{w} \intersect (X \union \overline X)^*
x_i (X \union \overline X)^* \intersect (X \union \overline X)^* \overline{x_i}
(X \union \overline X)^* = \emptyset$ for all $i \in \os{1, \dots, k}$.
Conversely, for a given assignment $\mathcal{A} \colon X \to \os{0, 1}$, we
call $w_\mathcal{A} = \ell_1 \cdots \ell_k$, where $\ell_i = x_i$ if
$\mathcal{A}(x_i) = 1$ and $\ell_i = \overline{x_i}$ otherwise, the \emph{word
induced by $\mathcal{A}$}.

\begin{theorem}
  If $\vV$ is a variety of finite semigroups with unbounded order, then the
  decision problem $\SgpInt(\vV)$ is $\NP$-hard.
  \label{thm:unbounded}
\end{theorem}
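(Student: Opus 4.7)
The plan is a polynomial-time reduction from $\SAT$ to $\SgpInt(\vV)$. If $\vV \not\subseteq \vLI$, Proposition~\ref{prop:hardness} already gives $\NP$-hardness, so I would assume $\vV \subseteq \vLI$. By Lemma~\ref{lem:order-li}, every monogenic subsemigroup of a member of $\vV$ has period one, so unbounded order means that $\vV$ contains monogenic semigroups of arbitrarily large cardinality equal to their index. Moreover, given such an $\langle s \rangle \in \vV$ of cardinality at least $m$, the equivalence collapsing $s^m, s^{m+1}, \dots, s^{\abs{\langle s\rangle}}$ into a single class is a semigroup congruence whose quotient is the monogenic semigroup $T_m = \{s, s^2, \dots, s^m\}$ with $s^m = s^{m+1}$. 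Closure of $\vV$ under taking divisors therefore yields $T_m \in \vV$ for every $m \ge 1$.

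Given a $\SAT$ instance with variables $X = \{x_1, \dots, x_k\}$ and clauses $C_1, \dots, C_\ell \subseteq X \cup \overline X$, I would build the $\SgpInt(\vV)$ instance over the alphabet $A = X \cup \overline X$ with semigroup $T = T_{2k}$. For each variable index $i$, define $\varphi_i \colon A^+ \to T$ by $\varphi_i(x_i) = \varphi_i(\overline{x_i}) = s$ and $\varphi_i(a) = s^2$ for every other $a \in A$, with accepting set $P_i = \{s^{2k-1}\}$. For each clause $C_j$, define $\psi_j \colon A^+ \to T$ by $\psi_j(a) = s$ if $a \in C_j$ and $\psi_j(a) = s^2$ otherwise, with accepting set $Q_j = \{s^{2k-1}, s^{2k-2}, s^{2k-3}\}$. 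The resulting instance has size polynomial in $k + \ell$.

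For correctness, let $c_i(w) = \#x_i(w) + \#\overline{x_i}(w)$ and let $d_j(w)$ count the literal occurrences from $C_j$ in $w$; then $\varphi_i(w) = s^{2\abs{w} - c_i(w)}$ and $\psi_j(w) = s^{2\abs{w} - d_j(w)}$. Because $s, s^2, \dots, s^{2k}$ are pairwise distinct in $T$, the constraints $\varphi_i(w) = s^{2k-1}$ force $2\abs{w} - c_i(w) = 2k - 1$ exactly; summing over $i$ and using $\sum_i c_i(w) = \abs{w}$ yields $(2k-1)\abs{w} = k(2k-1)$, hence $\abs{w} = k$ and $c_i(w) = 1$ for every $i$. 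A witness word is therefore a permutation of one literal per variable and induces a well-defined assignment $\mathcal{A}_w$; the clause constraints then read $1 \le d_j(w) \le 3$, so $\mathcal{A}_w$ satisfies every clause, while conversely $w_\mathcal{A}$ lies in the intersection for any satisfying $\mathcal{A}$.

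The main subtlety I anticipate is the exponent arithmetic that pins $\abs{w}$ down to exactly $k$: the target exponent $2k-1$ must sit strictly below the absorbing tail of $T$, which is why $T$ is chosen with index at least $2k$. This is exactly where the unbounded order hypothesis enters, and the whole construction should go through as soon as monogenic semigroups of arbitrary cardinality are available in $\vV$.
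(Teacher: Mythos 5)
Your reduction is correct and follows essentially the same route as the paper: both use the unbounded-order hypothesis to obtain the threshold-counting monogenic semigroup (your $T_m$ is the paper's $S_m = \os{1,\dots,m}$ with $i \circ j = \min\os{i+j,m}$) as a divisor, and then encode \SAT{} via capped occurrence counts. The only cosmetic difference is that the paper introduces an extra morphism $g_0$ with accepting set $\os{k}$ to force $\abs{w}=k$ directly, whereas you derive $\abs{w}=k$ by summing the $k$ per-variable constraints; both work.
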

\begin{proof}
  We may assume $\vV \subseteq \vLI$, otherwise $\SgpInt(\vV)$ is $\NP$-hard
  by~Proposition~\ref{prop:hardness}.
  For each $k \in \N$ the semigroup $S_k = \os{1, \dots, k}$ with the binary
  operation $i \circ j = \min\os{i+j,k}$ is contained in~$\vV$.
  To see this, take some arbitrary $k \in \N$. Since $\vV$ has unbounded order,
  some monogenic semigroup $T$ of cardinality $m \ge k$ appears as a
  subsemigroup in $\vV$. Let $s$ be a generator of $T$. By
  Lemma~\ref{lem:order-li} and since $m \ge k$, the mapping $h \colon T \to
  S_k$ defined by $h(s) = 1$ is a surjective morphism. By closure of $\vV$
  under divisors, the semigroup $S_k$ itself belongs to $\vV$.

  We now reduce $\SAT$ to $\SgpInt(\vV)$.
  Suppose we are given a set of variables $X = \os{x_1, \dots, x_k}$ and a set
  of clauses $\os{C_1, \dots, C_n}$ where $C_j = \os{\ell_{j1}, \ell_{j2},
  \ell_{j3}}$ for each $j \in \os{1, \dots, n}$ and for literals $\ell_{j1},
  \ell_{j2}, \ell_{j3}$ over $X$.

  We let $S = S_{k+2}$ be the monogenic semigroup of cardinality $k + 2$
  defined above.
  We introduce morphisms $g_0, \dots, g_k, h_1, \dots, h_n \colon (X \union
  \overline X)^+ \to S$ defined by
  \begin{equation*}
    g_i(\ell) =
      \begin{cases}
        2 & \text{if $i > 0$ and $\ell \in \os{x_i, \overline{x_i}}$}, \\
        1 & \text{otherwise},
      \end{cases}
    \qquad
    h_j(\ell) =
      \begin{cases}
        2 & \text{if $\ell \in C_j$}, \\
        1 & \text{otherwise}.
      \end{cases}
  \end{equation*}
  for $0 \le i \le k$ and $1 \le j \le n$.
  We let $P_0 = \os{k}$, $P_1 = \dots = P_k = \os{k+1}$ and $Q_1 = \dots =
  Q_n = \os{k+1, k+2}$.
  It is easy to check that the intersection
  \begin{equation*}
    L = \bigcap_{i = 0}^n g_i^{-1}(P_i) \intersect \bigcap_{j = 1}^k h_j^{-1}(Q_j)
  \end{equation*}
  is non-empty if and only if there exists a satisfying assignment.
  To see this, the following three observations are crucial:
  \begin{enumerate}
    \item $g_0^{-1}(P_0)$ contains all words over $(X \union \overline X)$ with
      exactly $k$ letters,
    \item $g_i^{-1}(P_i) \intersect g_0^{-1}(P_0)$ contains all words from the
      set $(X \union \overline X)^k$ with exactly one occurrence of $x_i$ or
      exactly one occurrence of $\overline{x_i}$ (but not both), and
    \item $h_j^{-1}(Q_j) \intersect g_0^{-1}(P_0)$ contains all words from the
      set $(X \union \overline X)^k$ with at least one occurrence of any of the
      literals $\ell_{j1}, \ell_{j2}, \ell_{j3}$.
  \end{enumerate}
  By the first two properties, all words from $L$ are of the form $\ell_1
  \cdots \ell_k \in (X \union \overline X)^k$ with $\abs{\os{\ell_1, \dots,
  \ell_k} \intersect \os{x_i, \overline{x_i}}} = 1$ for all $i \in \os{1,
  \dots, k}$.
  Thus, for each $w \in L$, the assignment $\mathcal{A}_w$ induced by $w$ is
  well-defined.

  Now, if $w \in L$, by the third property, we have $\mathcal{A}_w(\ell_{j1}) =
  1$ or $\mathcal{A}_w(\ell_{j2}) = 1$ or $\mathcal{A}_w(\ell_{j3}) = 1$ for
  each $j \in \os{1, \dots, n}$. Thus, $\mathcal{A}_w$ is satisfying.
  Conversely, if there exists a satisfying assignment $\mathcal{A} \colon X \to
  \os{0, 1}$, the word induced by $\mathcal{A}$ is contained in $L$.

  It is obvious that the reduction can be performed in polynomial time. A more
  careful analysis shows that the reduction can even be carried out by a
  $\AC^0$ circuit family.
\end{proof}

To complement the previous result, let us now consider a very restricted
variety of order $2$ (one can show that all varieties $\vV \subseteq \vLI$ of
order $1$ are so-called \emph{rectangular bands} and contained in $\vLI_1$
already).

\begin{theorem}
  $\SgpInt(\vAp_2 \cap \vNil)$ is $\NP$-complete.%
  \label{thm:nil-bounded}
\end{theorem}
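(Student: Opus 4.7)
Containment in $\NP$ is immediate from Corollary~\ref{crl:li}: the defining identity $x^2 y = x^2 = y x^2$ forces every $x^2$ to equal the unique zero $0$ of $S$, so the only idempotent is $0$, its local monoid $0S0 = \{0\}$ is trivial, and hence $\vAp_2 \cap \vNil \subseteq \vLI$.

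For $\NP$-hardness I would reduce from $\SAT$. The main obstacle is that the bounded-order restriction forbids the counting trick used in Theorem~\ref{thm:unbounded}: no monogenic semigroup in $\vAp_2 \cap \vNil$ can enforce that a word has exactly $k$ letters. The workaround is a polynomially-sized non-monogenic ``path-with-flag'' semigroup that simultaneously enforces well-formedness of the encoded assignment and checks one clause.

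Given variables $X = \{x_1, \dots, x_k\}$ and clauses $C_1, \dots, C_n$, I work over the alphabet $A = \{\lambda_i, \bar\lambda_i : 1 \le i \le k\}$, the intended encoding of an assignment $\mathcal{A}$ being the word $w_\mathcal{A} = \ell_1 \cdots \ell_k$ with $\ell_i = \lambda_i$ iff $\mathcal{A}(x_i) = 1$. For each clause $C_j$, define a semigroup $S_j$ on the set $\{p_{a,b}^f : 0 \le a < b \le k,\, f \in \{0, 1\}\} \cup \{0\}$ with multiplication $p_{a,b}^f \cdot p_{b,c}^g = p_{a,c}^{f \vee g}$ and all other products (including every square) equal to $0$. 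Associativity is a short routine check, and since every square is $0$, the identities $x^2 y = x^2 = y x^2$ hold automatically, giving $S_j \in \vAp_2 \cap \vNil$. The morphism $h_j \colon A^+ \to S_j$ sends a letter $\alpha$ of variable index $i$ to $p_{i-1, i}^{f_j(\alpha)}$, where $f_j(\alpha) = 1$ iff the literal encoded by $\alpha$ (namely $x_i$ if $\alpha = \lambda_i$ and $\bar x_i$ if $\alpha = \bar\lambda_i$) occurs in $C_j$; the accepting set is $P_j = \{p_{0, k}^1\}$.

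Correctness follows by unpacking the multiplication: any $w$ with $h_j(w) \ne 0$ must chain its indices consecutively, forcing $w = \ell_1 \cdots \ell_k$ with $\ell_i \in \{\lambda_i, \bar\lambda_i\}$, and for such a well-formed $w$ one obtains $h_j(w) = p_{0, k}^{f_j(\ell_1) \vee \cdots \vee f_j(\ell_k)}$, which lies in $P_j$ precisely when the induced assignment $\mathcal{A}_w$ satisfies $C_j$. Hence $\bigcap_{j=1}^n h_j^{-1}(P_j)$ is non-empty iff the input $\SAT$ instance is satisfiable. Each $S_j$ has $O(k^2)$ elements, so the reduction is polynomial-time, completing the proof.
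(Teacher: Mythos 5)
Your proof is correct and follows essentially the same route as the paper: both establish membership in $\NP$ via Corollary~\ref{crl:li} (using $\vAp_2 \cap \vNil \subseteq \vLI$) and reduce $\SAT$ using the nilpotent ``path'' semigroup on consecutive index pairs with an absorbing zero, which forces any witness to be a well-formed assignment word $\ell_1 \cdots \ell_k$. The only difference is cosmetic: the paper checks each clause $C_j$ with a separate morphism into the flag-free path semigroup by sending the literals of $C_j$ to $0$ and taking $\os{0}$ as the accepting set, whereas you fold the clause check into a flag bit carried along the path and accept a single non-zero element.
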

\begin{proof}
  As in the previous proof, we reduce $\SAT$ to $\SgpInt(\vAp_2 \cap \vNil)$.
  Containment in $\NP$ follows from Corollary~\ref{crl:li} and from $\vAp_2
  \cap \vNil \subseteq \vNil \subseteq \vLI$.

  Suppose we are given a set of variables $X = \os{x_1, \dots, x_k}$ as well as
  a set of clauses $\os{C_1, \dots, C_n}$ where $C_j = \os{\ell_{j1},
  \ell_{j2}, \ell_{j3}}$ for each $j \in \os{1, \dots, n}$ and literals
  $\ell_{j1}, \ell_{j2}, \ell_{j3}$ over $X$.
  Let $S$ be the finite semigroup $\set{(i, j)}{1 \le i \le j \le k} \union
  \os{0}$ defined by the multiplication
  \begin{equation*}
    (i, j)(k, \ell) = \begin{cases}
      (i, \ell) & \text{if $k = j+1$}, \\
      0 & \text{otherwise}.
    \end{cases}
  \end{equation*}
  The element $0$ is a zero element.
  Let $g, h_1, \dots, h_n \colon (X \union \overline X)^+ \to S$ be the
  morphisms defined by $g(x_i) = g(\overline{x_i}) = (i, i)$ and by
  \begin{equation*}
    h_j(x_i) =
      \begin{cases}
        (i, i) & \text{if $x_i \not\in C_j$}, \\
        0 & \text{otherwise},
      \end{cases}
    \qquad
    h_j(\overline{x_i}) =
      \begin{cases}
        (i, i) & \text{if $\overline{x_i} \not\in C_j$}, \\
        0 & \text{otherwise}.
      \end{cases}
  \end{equation*}
  for $1 \le i \le k$ and $1 \le j \le n$.
  As accepting sets, we choose $P = \os{(1, k)}$ for $g$ and $Q_1 = \dots = Q_n
  = \os{0}$ for $h_1, \dots, h_n$.
  Again, we would like to show that the intersection
  \begin{equation*}
    L = g^{-1}(P) \intersect \bigcap_{j = 1}^k h_j^{-1}(Q_j)
  \end{equation*}
  is non-empty if and only if there exists a satisfying assignment for $\os{C_1,
  \dots, C_n}$.
  The following two properties hold:
  \begin{enumerate}
    \item $g^{-1}(P)$ contains all words of the form $\ell_1 \cdots \ell_k$
      with $\ell_i \in \os{x_i, \overline{x_i}}$ for $1 \le i \le k$,
    \item $g^{-1}(P) \intersect h_j^{-1}(Q_j)$ contains all words of this form
      containing at least one of the letters $\ell_{j1}, \ell_{j2}, \ell_{j3}$.
  \end{enumerate}

  Let $w \in A^+$ be a word with $g(w) \in P$ and $h_j(w) \in Q_j$ for all $j
  \in \os{1, \dots, n}$. Then, by the first property above, the assignment
  $\mathcal{A}_w$ induced by $w$ is well-defined. Moreover, by the second
  property, we have $\mathcal{A}_w(C_1) = \dots = \mathcal{A}_w(C_n) = 1$ and
  thus, $\mathcal{A}_w$ satisfies $\os{C_1, \dots, C_n}$.
  Conversely, it is easy to see that each word induced by a satisfying
  assignment is contained in $L$.

  Note that the constructed semigroup belongs to $\vAp_2 \cap \vNil$ since by
  definition, we have $(i, j)(i, j) = 0$ for all $(i, j) \in S$.

  It is obvious that the reduction can be performed in polynomial time.
\end{proof}

In view of the previous theorems, the following result might be surprising.
For the class of all commutative semigroups within $\vLI$, the semigroup
intersection problem is $\NP$-hard by Theorem~\ref{thm:unbounded}.
The variety $\vAp_2 \cap \vNil$ has order $2$ and its semigroup intersection
problem is $\NP$-hard by Theorem~\ref{thm:nil-bounded}.
However, if we combine commutativity and bounded order, the problem becomes
easier.

\begin{theorem}
  If $\vV \subseteq \vCom \intersect \vLI$ is a variety of finite semigroups
  with bounded order, then $\vV$ has the {\plPCP}. Thus, there exists some $k
  \in \N$ such that $\SgpInt(\vV) \in \qAC^k$ and $\SgpInt(\vV)$ is decidable
  in quasi-polynomial time.%
  \label{thm:bounded-com}
\end{theorem}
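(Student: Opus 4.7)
My approach is a case split on whether $h_i(w)$ equals the zero $z_i$ of $S_i$ for every $i$, combined with the polylogarithmic circuits property of $\vCom$ established in \cite{FleischerCCC18:short}.

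\emph{Structural preliminaries.} For every $S \in \vV$ I first observe that $S$ has a unique idempotent which is a zero: any idempotent $e$ satisfies $e = ete$ because $S \in \vLI$, and then commutativity gives $et = ete = e$, so $e$ is absorbing. Combining this with Lemma~\ref{lem:order-li} and the bounded order $d$, every $s \in S$ satisfies $s^d = z$. Because $\vV$ is closed under direct products and subsemigroups, these facts transfer to the subsemigroup $T \subseteq S_1 \times \cdots \times S_k$ generated by $\phi(A_w)$, where $\phi = (h_1, \ldots, h_k)$.

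\emph{Easy case} ($h_i(w) = z_i$ for every $i$). Take any letter $a \in A_w$ and let $w' = a^d$. Then $h_i(w') = h_i(a)^d = z_i = h_i(w)$ for every $i$, and by Lemma~\ref{lem:slp-intro} one obtains an SLP of size $\bigO(\log d)$; since $d$ is a fixed constant of the variety, this is $\bigO(1)$.

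\emph{Hard case} ($h_{i_0}(w) \neq z_{i_0}$ for some $i_0$). Here bounded order forces $|w|_a < d$ for every $a \in A_w$, since a larger multiplicity would put a zero factor into the $i_0$-coordinate and kill $h_{i_0}(w)$. I then invoke the \plCP{} of $\vCom$ from \cite{FleischerCCC18:short} applied to the commutative semigroup $T$ and the morphism $A^+ \to T$ induced by $\phi$: this produces an SLP of size at most $\log^c|T|$ whose image under $\phi$ equals $\phi(w)$, hence whose image under every $h_i$ matches $h_i(w)$.

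The main obstacle is turning the bound $\log^c|T|$ into a genuine polylogarithm in $N = |S_1| + \cdots + |S_k|$. The crude estimate $|T| \leq \prod_i |S_i|$ only yields $\log|T| \leq \sum_i \log|S_i| \leq N$ and therefore just the \pPCP{} via Corollary~\ref{crl:plcp-pcp}. Sharpening to \plPCP{} requires exploiting the uniform exponent $x^d = 0_T$ together with the non-vanishing condition $\phi(w) \neq 0_T$: a structural argument on commutative nilpotent semigroups of bounded exponent should let one replace $T$ by a subsemigroup of size quasi-polynomial in $N$, from which $\log^c|T|$ is polylogarithmic in $N$. The containment in $\qAC^k$ and decidability in quasi-polynomial time then follow immediately from Corollary~\ref{crl:plpcp}.
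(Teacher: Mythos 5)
Your structural preliminaries, your easy case, and the per-letter multiplicity bound $\abs{w}_a < d$ in the hard case are all correct, but the hard case has a genuine gap --- and it is exactly the one you acknowledge. Routing the argument through the \plCP{} of $\vCom$ applied to the product subsemigroup $T$ cannot work as stated: $\log\abs{T}$ can be of order $N$, so you only recover the \pPCP{}. More fundamentally, the \plCP{} of $\vCom$ makes no use of the bounded-order hypothesis, whereas bounded order is precisely what separates this theorem from the $\NP$-hardness of Theorem~\ref{thm:unbounded}, which already applies to commutative nilpotent semigroups of unbounded order. So the ``structural argument that should let one replace $T$ by a subsemigroup of quasi-polynomial size'' is not a routine sharpening; it is the entire content of the theorem, and it is missing.

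The ingredient you need is a length bound rather than a compression bound: in a semigroup $S \in \vCom \intersect \vLI$ all of whose monogenic subsemigroups have size at most $d$, every product of at least $d(\log\abs{S}+1)$ elements equals the zero. Granting this, your hard case closes immediately with no SLP machinery at all: if $h_{i_0}(w) \ne z_{i_0}$, then $\abs{w} < d(\log\abs{S_{i_0}}+1) \le d(\log N + 1)$, and the canonical SLP of $w$ itself is a witness of size $\bigO(\log N)$. The paper proves this length bound by writing a hypothetical long nonzero product as $s_1^{i_1} \cdots s_m^{i_m}$ with distinct $s_r$ (your step), noting that $m \le \log\abs{S}+1$ forces some $i_r \ge d$ and hence a zero factor, and otherwise mapping the power set of $\os{1, \dots, m}$ under union into $S$ via $K \mapsto \prod_{r \in K} s_r^{i_r}$; since $2^m - 1 > \abs{S}$, two incomparable subsets have equal image, and replacing the lighter one by the heavier one strictly decreases the number of distinct factors without decreasing the total length, so induction on $m$ yields a contradiction. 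Your bound $\abs{w}_a < d$ controls only the exponents $i_r$, not the number $m$ of distinct letters, which is why it cannot by itself produce a polylogarithmic witness.
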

\begin{proof}
  We show that if every monogenic subsemigroup of $S \in \vCom \intersect \vLI$
  has size at most $k$, then every product of at least $k (\log\abs{S} + 1)$
  elements is the zero element. Thus, every non-empty intersection of languages
  recognized by multiple morphisms to such semigroups contains a witness of
  logarithmic size.
  Note that $\vCom \intersect \vLI \subseteq \vNil$, so the $k$-fold power of
  any element in $S$ is the zero element.

  Assume, for the sake of contradiction, that there exists a product of at
  least $k (\log\abs{S} + 1)$ elements which is not the zero element.
  By reordering elements, we can rewrite this product as $s_1^{i_1} \cdots
  s_m^{i_m}$ with $s_i \ne s_j$ for $1 \le i < j \le m$.
  We proceed by induction on $m$.
  If $m \le \log\abs{S} + 1$, then there exists some $r \in \os{1, \dots, m}$
  with $i_r \ge k$.
  Since each monogenic subsemigroup of $S$ has size at most $k$, the element
  $s_r^{i_r}$ then is a zero element, a contradiction. Suppose now that $m >
  \log\abs{S} + 1$.

  The set $T = \mathcal{P}(\os{1, \dots, m}) \setminus \os{\emptyset}$ forms a
  semigroup with union as binary operation.
  Let $h \colon T \to S$ be the morphism defined by $h(r) = s_r^{i_r}$ for $1
  \le r \le m$.
  We have $\abs{T} = 2^m - 1 \ge 2^{m-1} > 2^{\log\abs{S}} = \abs{S}$.
  Thus, by the pigeon hole principle, there exist two sets $K_1, K_2 \subseteq
  \os{1, \dots, m}$ with $K_1 \ne K_2$ and $h(K_1) = h(K_2)$.

  If $K_1 \subsetneq K_2$, then multiplying the product by $h(K_2 \setminus
  K_1)$ does not change its value and $k$-fold multiplication shows that the
  product is zero, a contradiction.
  The case $K_2 \subsetneq K_1$ is symmetric.
  Thus, we may assume that neither $K_1 \subseteq K_2$ nor $K_2 \subseteq K_1$.
  The \emph{length} of a set $K \subseteq \os{1, \dots, m}$ is the sum of all
  $i_r$ with $r \in K$.
  By symmetry, we may assume that the length of $K_1$ is at most the length of
  $K_2$.
  We replace the factor $h(K_1)$ of the product by $h(K_2)$ and obtain the
  statement by induction on the number $m$ of different elements in the
  product\,---\,the length of this new product $h(K_2) h(\os{1, \dots, m}
  \setminus K_1)$ is at least the length of the original product and the number
  of different elements decreases since $K_1 \setminus K_2 \ne \emptyset$.
\end{proof}

\section{Open Problems}

It remains open whether the observation that hardness is not always caused by
purely structural properties also applies to varieties between $\vLI$ and
$\vLDS$ in the semigroup setting, between $\vDO$ and $\vDS$ in the monoid
setting or between $\vR$ and $\vDS$ in the automaton setting.
Another major challenge is obtaining algebraic characterizations of all classes
of finite semigroups with the \pPCP.
As a first step, we suggest proving (or disproving) that the variety $\vLG$ has
the \pPCP.

\paragraph{Acknowledgements.} I would like to thank the anonymous referees of
the conference version of this paper for providing helpful comments.

\newcommand{\Ju}{Ju}\newcommand{\Ph}{Ph}\newcommand{\Th}{Th}\newcommand{\Ch}{Ch}\newcommand{\Yu}{Yu}\newcommand{\Zh}{Zh}\newcommand{\St}{St}\newcommand{\curlybraces}[1]{\{#1\}}

\end{document}